\newcommand{\eps}{\varepsilon}
\newcommand{\reals}{\ensuremath{\mathbb{R}}}
\newcommand{\integers}{\ensuremath{\mathbb{Z}}}
\newcommand{\field}{\mathbb{F}}
\newcommand{\mse}{\textsc{MSE}\xspace}
\newcommand{\sumcheck}{\textsc{SumCheck}\xspace}
\newcommand{\finv}{\textsc{Finv}\xspace}
\newcommand{\subgraph}{\textsc{Subset}\xspace}
\newcommand{\odd}{\ensuremath{\mathbb{O}}}
\newcommand{\disjoint}{\textsc{Disjointness}\xspace}
\newcommand{\spanningtree}{\textsc{r-SpanningTree}\xspace}
\newcommand{\maximality}{\textsc{Maximality}\xspace}
\newtheorem{theorem}{Theorem}[section]
\newtheorem{corollary}{Corollary}[theorem]
\newtheorem{lemma}[theorem]{Lemma}
\title{Streaming Verification of Graph Properties\thanks{This research was supported in part by National Science Foundation under grants IIS-1251049, CNS-1302688}}
\author{Amirali Abdullah\thanks{\noindent Department of Mathematics, University
   of Michigan} \and Samira Daruki\thanks{\noindent School of
   Computing, University of Utah} \and Chitradeep Dutta Roy\thanks{\noindent
   School of Computing, University of Utah} \and Suresh Venkatasubramanian\thanks{\noindent School of Computing, University of Utah}}
   \date{}
\begin{document}
\maketitle
\begin{abstract}
Streaming interactive proofs (SIPs) are a framework for outsourced computation. A computationally limited streaming client (the verifier) hands over a large data set to an untrusted server (the prover) in the cloud and the two parties run a protocol to confirm the correctness of result with high probability.
SIPs are particularly interesting for problems that are hard to solve (or even approximate) well in a streaming setting. The most notable of these problems is finding maximum matchings, which has received intense interest in recent years but has strong lower bounds even for constant factor approximations. 
In this paper, we present efficient streaming interactive proofs that can verify
maximum matchings \emph{exactly}. Our results cover all flavors of matchings
(bipartite/non-bipartite and weighted). In addition, we also present streaming
verifiers for approximate metric TSP. In particular, these are the first
efficient results for weighted matchings and for metric TSP in any streaming verification model.
\end{abstract}

\section{Introduction}
\label{sec:introduction}

The shift from direct computation to outsourcing in the cloud has led to new
ways of thinking about massive scale computation. In the \emph{verification}
setting,  computational effort is split between a computationally weak client
(the \emph{verifier}) who owns the data and wants to solve a desired problem,
and a more powerful server (the \emph{prover}) which performs the computations.
Here the client has only limited (streaming) access to the data, as well as a
bounded ability to talk with the server (measured by the amount of
communication), but wishes to verify the correctness of the prover's answers.
This model can be viewed as a streaming modification of a classic interactive
proof system (a streaming IP, or SIP), and has been the subject of a number of
papers \cite{daruki2015streaming,thaler14semi,cormode2011verifying,chakrabartiverifiable,cormode2013streaming,chakrabarti09annotations,klauck1,klauckprakash} that have established sublinear (verifier) space and communication bounds for classic problems in streaming and data analysis.

In this paper, we present streaming interactive proofs for graph  problems that are traditionally hard for streaming, such as for the maximum matching problem (in bipartite and general graphs, both weighted and unweighted) as well for approximating the traveling salesperson problem. 
In particular, we present protocols that verify a matching \emph{exactly} in a graph using polylogarithmic space and polylogarithmic communication apart from the matching itself. In all our results, we consider the input in the \emph{dynamic streaming model}, where graph edges are presented in arbitrary order in a stream and we allow both deletion and insertion of edges. All our protocols use either $\log n$ rounds of communication or (if the output size is sufficiently large or we are willing to tolerate superlogarithmic communication) constant rounds of communication. 

To prove the above results, we also need SIPs for sub-problems like connectivity, minimum spanning tree and triangle counting. While it is possible to derive similar (and in some cases better) results for these subroutines using known techniques \cite{GKR}, we require explicit protocols that return structures that can be used in the computation pipeline for the TSP. Furthermore, our protocols for these problems are much simpler than what can be obtained by techniques in \cite{GKR}, which require some effort to obtain precise bounds on the size and depth of the circuits corresponding to more complicated parallel algorithms. We summarize our results in Table~\ref{tab:results}. 

\begin{table}[htbp]
\centering
\begin{tabular}{c|c|c|c|c} \hline
 &  \multicolumn{2}{c}{$\log n$ rounds} & \multicolumn{2}{c}{$\gamma = O(1)$ rounds} \\ \cline{2-5}
 Problem   & Verifier Space & Communication & Verifier Space & Communication\\ \hline
Triangle Counting & $\log^2 n$ & $\log^2 n$ & $\log n$ & $n^{1/\gamma}\log n$\\
Matchings (all versions) & $\log^2 n$ & $(\rho + \log n)\log n$ & $\log n$ & $(\rho + n^{1/\gamma'})\log n$ (*)\\
Connectivity & $\log^2 n$ & $n\log n$ & $\log n$ & $n\log n$\\
Minimum Spanning Tree & $\log ^2 n$ & $n\log^2n/\eps$& $\log n $ &$n\log^2 n/\eps$\\
Travelling Salesperson & $\log^2n$ & $n\log^2n/\eps$ & $\log n $ &$n\log^2 n/\eps$
\end{tabular}
\caption{Our Results. All bounds expressed in bits, upto constant factors. For the matching results, $\rho = \min(n, C)$ where $C$ is the cardinality of the optimal matching (weighted or unweighted). Note that for the MST, the verification is for a $(1+\epsilon)$-approximation. For the TSP, the verification is for a $(3/2 + \eps)$-approximation. (*) $\gamma'$ is a linear function of $\gamma$ and is strictly more than $1$ as long as $\gamma$ is a sufficiently large constant.} 
\label{tab:results}
\end{table}

\paragraph*{Significance of our Results.}
\vspace*{-2mm}
While the streaming model of computation has been extremely effective for processing numeric and matrix data, its ability to handle large graphs is limited, even in the so-called \emph{semi-streaming} model where the streaming algorithm is permitted to use space quasilinear in the number of vertices. Recent breakthroughs in graph sketching \cite{mcgregor2014graph} have led to space-efficient approximations for many problems in the semi-streaming model but canonical graph problems like matchings have been shown to be provably hard. 

It is known \cite{kapralov2013better} that no better than a $1-1/e$ approximation to the maximum cardinality matching is possible in the streaming model, even with space $\tilde{O}(n)$. It was also known that even allowing limited communication (effectively a single message from the prover) required a space-communication product of $\Omega(n^2)$ \cite{chakrabarti09annotations, cormode2013streaming}. Our results show that even allowing a few more rounds of communication dramatically improves the space-communication tradeoff for matching, as well as yielding \emph{exact} verification. We note that streaming algorithms for matching vary greatly in performance and complexity depending in whether the graph is weighted or unweighted, bipartite or nonbipartite. In contrast, our results apply to all forms of matching. Interestingly, the special case of \emph{perfect matching}, by virtue of being in RNC \cite{Karp86}, admits an efficient SIP via results by Goldwasser, Kalai and Rothblum \cite{GKR} and Cormode, Thaler and Yi \cite{cormode2011verifying}. Similarly for triangle counting, the best streaming algorithm \cite{ahn2012graph} yields an additive $\eps n^3$ error estimate in polylogarithmic space, and again in the annotation model (effectively a single round of communication) the best result yields a space-communication tradeoff of $n^2\log^2n$, which is almost exponentially worse than the bound we obtain. We note that counting triangles is a classic problem in the sublinear algorithms literature, and identifying optimal space and communication bounds for this problem was posed as an open problem by Graham Cormode in the Bertinoro sublinear algorithms workshop \cite{sublinear}. Our bound for verifying a $3/2+\epsilon$ approximation for the TSP in dynamic graphs is also interesting: a trivial $2$-approximation in the semi-streaming model follows via the MST, but it is open to improve this bound (even on a grid) \cite{tsp-open}. 

In general, our results can be viewed as providing further insight into the tradeoff between space and communication in sublinear algorithms. The annotation model of verification provides $\Omega(n^2)$ lower bounds on the space-communication product for the problems we consider: in that light, the fact that we can obtain polynomially better bounds with only constant number of rounds demonstrates the power of just a few rounds of interaction. We note that as of this paper, virtually all of the canonical hard problems for streaming algorithms (\textsc{Index} \cite{chakrabartiverifiable}, \textsc{Disjointness} \cite{disjoint1,disjoint2}, \textsc{Boolean Hidden Matching} \cite{bhm1,bhm2,bhm4}) admit efficient SIPs. A SIP for \textsc{Index} was presented in  \cite{chakrabartiverifiable}
and we present SIPs for \textsc{Disjointness} and \textsc{Boolean Hidden Matching} here as well.

Our model is also different from a standard multi-pass streaming framework, since communication must remain sublinear in the input and in fact in all our protocols the verifier still reads the input exactly once. 

From a technical perspective, our work continues the \emph{sketching} paradigm for designing efficient graph algorithms. All our results proceed by building linear sketches of the input graph. The key difference is that our sketches are not approximate but algebraic: based on random evaluation of polynomials over finite fields. Our sketches use higher dimensional linearization (``tensorization'') of the input, which might itself be of interest. They also compose: indeed, our solutions are based on building a number of simple primitives that we combine in different ways. Figure \ref{fig:results} illustrates
the interconnections between our tools and results.

\begin{figure}[h!]
\centering
\includegraphics[width=0.5\textwidth]{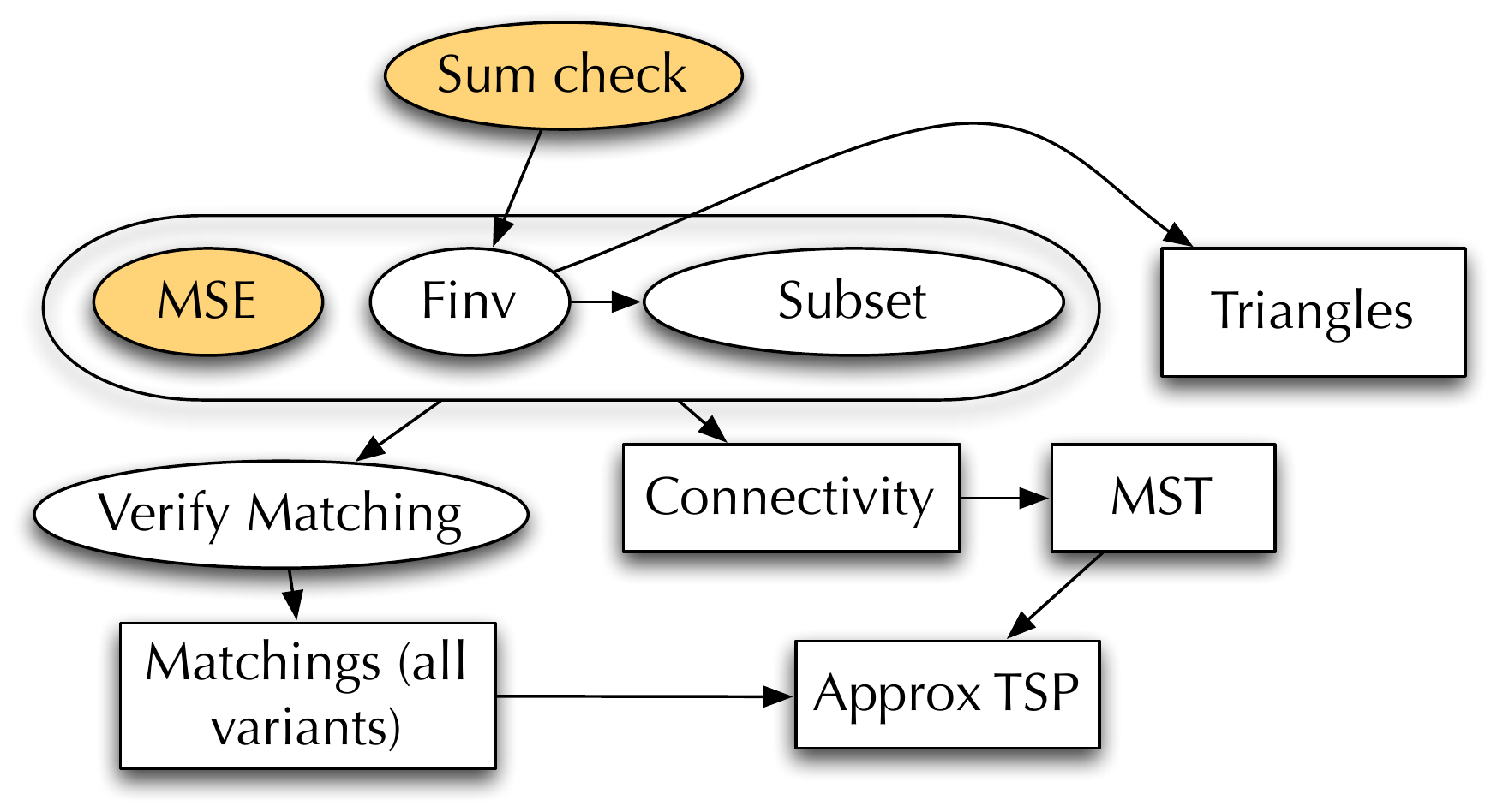}
\caption{Summary of our results. Subroutines are in ovals and problems are in rectangles. Shaded
boxes indicate prior work. An arrow from A to B indicates that B uses A as a subroutine\label{fig:results}}
\end{figure}

\section{Related Work}
\paragraph*{Outsourced computation.}
\vspace*{-2mm}
Work on outsourced computation comes in
three other flavors in addition to SIPs: firstly, there is work on
reducing the verifier and prover complexity without necessarily making the
verifier a sublinear algorithm\cite{GKR,STOC16paper,KRR}, in some cases using
cryptographic assumptions to achieve their bounds. Another approach is the idea of \emph{rational proofs} \cite{Azar-Micali,shikha,guo2016rational,guo2014rational}, in which the verifier uses a
payment function to give the prover incentive to be honest. Moving to sublinear verifiers, there has been research on designing SIPs where the verifier runs in sublinear
\emph{time}\cite{Gur-Rothblum,stoc16-subl-time}.

\paragraph*{Streaming Graph Verification.}
\vspace*{-2mm}
All prior work on streaming graph verification has been in the annotation model,
which in practice resembles a 1-round SIP (a single message from prover to
verifier after the stream has been read). In recent work, Thaler
\cite{thaler14semi} gives protocols for counting triangles, and computing
maximum cardinality matching with both $n \log n$ space and communication
cost. For matching, Chakrabarti \emph{et al.} \cite{chakrabarti09annotations} show that any annotation protocol with space cost $O(n^{1-\delta})$ requires communication cost $\Omega(n^{1+\delta})$ for any $\delta > 0$. They also show that any annotation protocol for graph connectivity with space cost $O(n^{1-\delta})$ requires communication cost $\Omega(n^{1+\delta})$ for any $\delta > 0$.

It is also proved that every protocol for this problem in the annotation model
requires $\Omega(n^2)$ product of space and communication. This is optimal upto
logarithmic factors. Furthermore, they conjecture that achieving smooth
tradeoffs between space and communication cost is impossible, i.e. it is not
known how to reduce the space usage to $o(n \log n)$ without blowing the
communication cost up to $\Omega(n^2)$ or vice versa
\cite{chakrabarti09annotations, thaler14semi}. Note that in all our protocols,
the product of space and communication is $O(n\, \text{poly}\log n)$. 
\paragraph*{Streaming Graph Algorithms.}
\vspace*{-2mm}
In the general dynamic streaming model, $\text{poly}\log 1/\eps$-pass streaming algorithms \cite{ahn2013access, ahn2013linear} give $(1+\eps)$-approximate answers and require $\tilde O(n)$ space In one pass. The best results for matching are \cite{chitnis2015parameterized}  (a parametrized algorithm for computing a maximal matching of size $k$ using $\tilde{O}(nk)$ space) and \cite{assadi2015tight, konrad2015maximum} which give a streaming algorithm for recovering an $n^\epsilon$-approximate maximum matching by maintaining a linear sketch of size $\tilde{O}(n^{2-3\epsilon})$ bits. In the single-pass insert-only streaming model,  Epstein et al. \cite{epstein2011improved} give a constant ($4.91$) factor approximation for weighted graphs using $O(n \log n)$ space. Crouch and Stubbs \cite{crouch2014improved} give a $(4+\epsilon)$-approximation algorithm which is the best known result for weighted matchings in this model. 
Triangle counting in streams has been studied extensively \cite{bar2002reductions, braverman2013hard, buriol2006counting, jowhari2005new, pavan2013counting}. For dynamic graphs, the most space-efficient result is the one by \cite{ahn2012graph}  that provides the aforementioned additive $\eps n^3$ bound in polylogarithmic space. The recent breakthrough in sketch-based graph streaming \cite{ahn2012analyzing} has yielded $\tilde O(n)$ semi-streaming algorithms for computing the connectivity, bipartiteness and minimum spanning trees of dynamic graphs. For
more details, see \cite{mcgregor2014graph}. 

\section{Preliminaries}
\label{sec:background}
We will work in the \emph{streaming interactive proof} (SIP) model first proposed by Cormode et al. \cite{cormode2011verifying}. In this model, there are two players, the prover \textsf{P} and the verifier \textsf{V}. The input consists of a \emph{stream} $\tau$ of items from a universe $\mathcal{U}$. 
Let $f$ be a function mapping $\tau$ to any finite set $\mathcal{S}$. A $k$-message SIP for $f$ works as follows: 
\vspace*{-2mm}
\begin{enumerate}
\item \textsf{V} and $\textsf{P}$ read the input stream and perform some computation on it.
\item \textsf{V} and \textsf{P} then exchange $k$ messages, after which \textsf{V} either outputs a value in $\mathcal{S} \cup \{\perp\}$, where $\perp$ denotes that $\textsc{V}$ is not convinced that the prover followed the prescribed protocol. 
\end{enumerate}
\textsf{V} is randomized. There must exist a prover strategy that causes the verifier to output $f(\tau)$ with probability $1-\eps_c$ for some $\eps_c \leq 1/3$. Similarly, for all prover strategies,  \textsf{V} must output a value in $\{f(\tau), \perp\}$ with probability $1-\eps_s$ for some $\eps_s \leq 1/3$.
The values $\eps_c$ and $\eps_s$ are respectively referred to as the completeness and soundness errors of the protocol. The protocols we design here will have perfect completeness ($\eps_c = 0$) \footnote{The constant $1/3$ appearing in the completeness and soundness requirements is chosen by convention \cite{arorabarak}. The constant $1/3$ can be replaced with any other constant in $(0, 1)$ without affecting the theory in any way.} We note that the annotated stream model of Chakrabarti et al. \cite{chakrabarti09annotations} essentially corresponds to one-message SIPs.\footnote{Technically, the annotated data streaming model allows the annotation to be interleaved with the stream updates, while the SIP model does not allow the prover and verifier to communicate until after the stream has passed. However, almost all known annotated data streaming protocols do not utilize the ability to interleave the annotation with the stream, and hence are actually 1-message SIPs, but without any interaction from the verifier to prover side.}

\paragraph*{Input Model.}
We will assume the input is presented as stream updates to a vector. In general, each element of this stream is a tuple $(i, \delta)$, where each $i$ lies in a universe $\mathcal{U}$ of size $u$, and $\delta \in \{+1, -1\}$. The data stream implicitly defines a frequency vector $\mathbf{a}=(a_1, \dots, a_u)$, where $a_i$ is the sum of all $\delta$ values associated with $i$ in the stream. The stream update $(i, \delta)$ is thus the implicit update $\mathbf{a}[i] \leftarrow \mathbf{a}[i] + \delta$. In this paper, the stream consists of edges drawn from $\mathcal{U} = [n] \times [n]$ along with weight information as needed. As is standard, we assume that edge weights are drawn from $[n^c]$ for some constant $c$. We allow edges to be inserted and deleted  but the final edge multiplicity is $0$ or $1$, and also mandate that the length of the stream is polynomial in $n$. Finally, for weighted graphs, we further constrain that the edge weight updates be atomic, i.e., that an edge along with its full weight be inserted or deleted at each step.

There are three parameters that control the complexity of our protocols: the vector length $u$, the length of stream $s$ and the maximum size of a coordinate $M = max_i \mathbf{a}_i$. In the protocols discussed in this paper $M$ will always be upper bounded by some polynomial in $u$, i.e. $\log M = O(\log u)$. All algorithms we present use linear sketches, and so the stream length $s$ only affects verifier running time. In Lemma \ref{verifier-time} we discuss how to reduce even this dependence, so that verifier update time becomes polylogarithmic on each step.

\paragraph*{Costs.}
\label{sec:costs}
A SIP has two costs: the verifier space, and the total communication, expressed as the number of bits exchanged between \textsf{V} and \textsf{P}. We will use the notation $(A, B)$ to denote a SIP with verifier space $O(A)$ and total communication  $O(B)$. We will also consider the number of rounds of communication between \textsf{V} and \textsf{P}. The basic versions of our protocols will require $\log n$ rounds, and we later show how to improve this to a constant number of rounds while maintaining the same space and similar communication cost otherwise.

\section{Overview of our Techniques}
\label{sec:overview}

For all the problems that we discuss the input is a data stream of edges of a graph where for an
edge $e$ an element in the stream is of the form $(i, j, \Delta)$. Now all our
protocols proceed as follows. We define a domain $\mathcal{U}$ of size $u$ and a
frequency vector $\mathbf{a} \in \integers^u$ whose entries are indexed by
elements of $\mathcal{U}$. A particular protocol might define a number of such
vectors, each over a different domain. Each stream element will trigger a set of
indices from $\mathcal{U}$ at which to update $\mathbf{a}$.
For example in case of matching, we derive this constraint universe from the LP
certificate, whereas for counting triangles our universe is derived from all
$O(n^3)$ possible three-tuples of the vertices.

The key idea in all our protocols is that since we cannot maintain $\mathbf{a}$ explicitly due to limited space, we instead maintain a linear \emph{sketch} of $\mathbf{a}$ that varies depending on the problem being solved. This sketch is computed as follows. We will design a polynomial that acts as a \emph{low-degree} extension of $f$ over an extension field $\field$ and can be written as $p(x_1, \ldots, x_d) = \sum_{u \in \mathcal{U}} a[u] g_u(x_1, x_2, \ldots, x_d)$. The crucial property of this polynomial is that it is \emph{linear} in the entries of $\mathbf{a}$. This means that polynomial evaluation at any fixed point $\mathbf{r} = (r_1, r_2, \ldots, r_d)$ is easy in a stream: when we see an update $a[u] \leftarrow a[u] + \Delta$, we merely need to add the expression $\Delta g_u(\mathbf{r})$ to a running tally. Our sketch will always be a polynomial evaluation at a \emph{random} point $\mathbf{r}$. Once the stream has passed, \textsf{V} and the prover \textsf{P} will engage in a conversation that might involve further sketches as well as further updates to the current sketch. In our descriptions, we will use the imprecise but convenient shorthand ``increment $\mathbf{a}[u]$'' to mean ``update a linear sketch of some low-degree extension of a function of $\mathbf{a}$''. It should be clear in each context what the specific function is. 

As mentioned earlier,  a single stream update of the form $(i, j, \Delta)$ might trigger updates in many entries of $\mathbf{a}$, each of which will be indexed by a multidimensional vector. We will use the wild-card symbol '$*$' to indicate that all values of that coordinate in the index should be considered. For example, suppose $\mathcal{U} \subseteq [n] \times [n] \times [n]$. The instruction ``update $\mathbf{a}[(i, *, j)]$'' should be read as ``update all entries $\mathbf{a}[t]$ where $t \in \{ (i, s, j) \mid s \in [n], (i,s,j) \in \mathcal{U}\}$''. We show later how to do these updates implicitly, so that verifier time remains suitably bounded.

\section{Some Useful Protocols}
\label{sec:protocols}
We will make use of two basic tools in our algorithms: Reed-Solomon fingerprints for testing vector equality, and the streaming \sumcheck protocol of Cormode et al. \cite{cormode2011verifying}. We summarize the main properties of these protocols here: for more details, the reader is referred to the original papers. 

\paragraph*{Multi-Set Equality (\mse).}
We are given streaming updates to the entries of two vectors $\mathbf{a}, \mathbf{a}' \in \integers^u$ and wish to check $\mathbf{a} = \mathbf{a}'$. 
Reed-Solomon fingerprinting is a standard technique to solve \mse  using only logarithmic space. 

\begin{theorem}[\mse, \cite{cormode2013streaming}]
\label{thm:multi-set-equality}
Suppose we are given stream updates to two vectors $\mathbf{a}, \mathbf{a}' \in \integers^u$ guaranteed to satisfy  $|\mathbf{a}_i|, |\mathbf{a}'_i| \leq M$ at the end of the data stream. Let $t = \max(M, u)$.
There is a streaming algorithm using $O(\log t)$ space, satisfying the following properties: (i) If $\mathbf{a} = \mathbf{a'}$, then the streaming algorithm outputs 1 with probability 1. (ii) If $\mathbf{a} \neq \mathbf{a'}$, then the streaming algorithm outputs 0 with probability at least $1-1/t^2$. 
\end{theorem}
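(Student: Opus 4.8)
The plan is to use Reed--Solomon fingerprinting: encode each vector as the evaluation of an associated univariate polynomial at a single random point of a suitably large prime field, and maintain this evaluation as a running sum while the stream passes.

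Concretely, I would first fix a prime $q$ with $t^3 \le q \le 2t^3$ (such a prime exists by Bertrand's postulate) and sample $r$ uniformly at random from $\field_q$; both $q$ and $r$ fit in $O(\log t)$ bits. Associate to a vector $\v{b} \in \integers^u$ the fingerprint $\phi_r(\v{b}) = \sum_{i=1}^u \v{b}_i\, r^i \bmod q$. The crucial point is that $\phi_r$ is \emph{linear} in $\v{b}$, so a stream update $(i,\delta)$ to $\v{a}$ (resp.\ to $\v{a}'$) is processed by adding $\delta r^i$ (resp.\ subtracting $\delta r^i$) to a single running register $F$ that, at the end of the stream, equals $\phi_r(\v{a}-\v{a}') \bmod q$; here $r^i \bmod q$ is computed by repeated squaring in $O(\log u)$ multiplications. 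All arithmetic is performed modulo $q$, so $F$ together with $q$, $r$, and the scratch space for the exponentiation never exceeds $O(\log t)$ bits. The algorithm outputs $1$ iff $F = 0$.

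For correctness, completeness is immediate: if $\v{a} = \v{a}'$ then $\v{a} - \v{a}' = \v{0}$ and $F = 0$ with probability $1$. For soundness, suppose $\v{a} \neq \v{a}'$ and set $\v{c} = \v{a} - \v{a}'$, so some coordinate $\v{c}_i \neq 0$ as an integer. Since $|\v{c}_i| \le |\v{a}_i| + |\v{a}'_i| \le 2M \le 2t < q$, that coordinate is also nonzero modulo $q$, so $P(x) = \sum_{i=1}^u \v{c}_i x^i \in \field_q[x]$ is a nonzero polynomial of degree at most $u$. It therefore has at most $u$ roots in $\field_q$, so $F = P(r) = 0$ occurs with probability at most $u/q \le t/t^3 = 1/t^2$ over the choice of $r$; hence the algorithm outputs $0$ with probability at least $1 - 1/t^2$.

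The argument is entirely standard, so there is no real obstacle; the only thing to watch is the choice of modulus, which must simultaneously (i) exceed $2M$ so that a nonzero integer difference survives reduction mod $q$, (ii) be at least $u\,t^2$ so the root-collision probability is at most $1/t^2$, and (iii) stay polynomial in $t$ so the space remains $O(\log t)$. Taking $q = \Theta(t^3)$ threads all three, using $u, M \le t$. It is worth noting that although intermediate coordinate values during the stream may exceed $M$, only the final vector is required to satisfy the bound, and since $F$ is reduced mod $q$ at every step this causes no difficulty.
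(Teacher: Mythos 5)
Your proof is correct and is precisely the standard Reed--Solomon fingerprinting argument that the paper invokes by citation (it imports this theorem from Cormode et al.\ without reproving it). The choice of modulus $q = \Theta(t^3)$ correctly balances the three constraints you identify, and your remark that linearity of the fingerprint makes intermediate coordinate values irrelevant is exactly the right observation for the dynamic-update setting.
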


\paragraph*{The \sumcheck Protocol.}
We are given streaming updates to a vector $\mathbf{a} \in \integers^u$ and a univariate polynomial  $h \colon \integers \to \integers$. The \textsc{Sum Check} problem (\sumcheck) is to verify a claim that $\sum_i h(\mathbf{a}_i) = K$. 

\begin{lemma}[\sumcheck, \cite{cormode2011verifying}]
\label{lem::sum-check-protocol}
There is a SIP to verify that $\sum_{i \in [u]} h(\mathbf{a}_i) = K$ for some claimed $K$.  The total number of rounds is $O(\log u)$ and the cost of the protocol is $\left( \log(u) \log|\field|  , \text{deg}(h) \log(u)   \log|\field|  \right)$. 
\end{lemma}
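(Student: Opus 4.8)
The plan is to realize this as the classical sum-check protocol, adapted to the streaming setting as in Cormode, Thaler and Yi \cite{cormode2011verifying}. First I would assume without loss of generality that $u = 2^d$ (pad $\mathbf{a}$ with zero coordinates; this at most doubles $u$), identify $[u]$ with the Boolean cube $\{0,1\}^d$, and write $a(v)$ for the coordinate of $\mathbf{a}$ indexed by $v \in \{0,1\}^d$. The key object is the multilinear extension $\tilde a \colon \field^d \to \field$ of the map $v \mapsto a(v)$, namely
\[
\tilde a(x_1,\dots,x_d) \;=\; \sum_{v \in \{0,1\}^d} a(v)\,\chi_v(x_1,\dots,x_d), \qquad \chi_v(x) = \prod_{j=1}^d\bigl(x_j v_j + (1-x_j)(1-v_j)\bigr),
\]
which is exactly a low-degree extension of the form described in Section~\ref{sec:overview}: it is linear in the entries of $\mathbf{a}$, and $\chi_v(\mathbf{r})$ is evaluable in $O(d)$ field operations. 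Hence, having drawn a single random point $\mathbf{r} = (r_1,\dots,r_d) \in \field^d$ \emph{before} the stream, the verifier maintains the sketch $\tilde a(\mathbf{r})$ by adding $\Delta\,\chi_v(\mathbf{r})$ to a running tally on each update $(v,\Delta)$; this costs $O(d\log|\field|)$ bits of space, dominated by the cost of storing $\mathbf{r}$.

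Next set $g(x_1,\dots,x_d) = h(\tilde a(x_1,\dots,x_d))$, a polynomial of degree at most $\deg(h)$ in each variable, and observe that since $\tilde a(v) = a(v)$ for $v \in \{0,1\}^d$ we have $\sum_{v \in \{0,1\}^d} g(v) = \sum_{i\in[u]} h(\mathbf{a}_i)$, so the claim ``$\sum_i h(\mathbf{a}_i)=K$'' is exactly ``$\sum_{v} g(v) = K$''. I would then run the standard sum-check recursion: in round $k$ the prover sends the univariate polynomial
\[
g_k(X) \;=\; \sum_{v_{k+1},\dots,v_d \in \{0,1\}} g(r_1,\dots,r_{k-1},X,v_{k+1},\dots,v_d),
\]
of degree $\le \deg(h)$, specified by $\deg(h)+1$ field elements; the verifier checks $g_1(0)+g_1(1)=K$ and thereafter $g_k(0)+g_k(1)=g_{k-1}(r_{k-1})$, each time revealing the already-chosen coordinate $r_k$ to the prover. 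After $d$ rounds the verifier performs the single terminal check $g_d(r_d) = h(\tilde a(\mathbf{r}))$, whose right-hand side is precisely the sketch it has been maintaining. Completeness is immediate. For soundness, the usual argument applies: if $K$ is wrong then at some round the prover must deviate from the honest polynomial, and by Schwartz--Zippel a uniform $r_k$ fails to expose the discrepancy with probability at most $\deg(h)/|\field|$; a union bound over the $d$ rounds and the terminal check gives soundness error $O(d\,\deg(h)/|\field|)$, which is at most $1/3$ once $|\field|$ is a large enough polynomial in $u$ and $\deg(h)$. The cost accounting then yields $d = O(\log u)$ rounds, verifier space $O(d\log|\field|) = O(\log u\,\log|\field|)$, and $d$ messages of $O(\deg(h)\log|\field|)$ bits, i.e. communication $O(\deg(h)\log u\,\log|\field|)$, as claimed.

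The one genuinely non-routine point — and the thing I would argue carefully — is the interaction between the streaming constraint and the adaptivity of sum-check: classically the challenges $r_k$ are drawn fresh one per round, but here the verifier must commit to the entire vector $\mathbf{r}$ before the stream in order to maintain $\tilde a(\mathbf{r})$. The fix, and its justification, is that $\mathbf{r}$ is uniform and independent of everything else, and the prover learns $r_k$ only in round $k$; so from the prover's viewpoint round $k$ is distributed exactly as in the classical protocol, and the soundness analysis transfers verbatim. The remaining loose ends are minor: the zero-padding to a power of two, and the fact that the terminal check requires the verifier to evaluate the univariate $h$ at one point — an $O(\deg(h))$-time, $O(\log|\field|)$-space computation that does not affect the stated bounds, with streaming update time handled separately in Lemma~\ref{verifier-time}.
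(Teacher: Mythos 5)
Your proposal is correct and follows essentially the same route as the paper: the paper's exposition (in its later section revisiting sum-check) writes $u=\ell^d$, builds the low-degree extension from Lagrange basis polynomials over $[\ell]$, has the verifier pre-commit to $\mathbf{r}$ and maintain $f_{\mathbf{a}}(\mathbf{r})$ as a linear sketch, and then specializes to $\ell=2$ to get the $O(\log u)$-round bounds — which is exactly your multilinear-extension instantiation over $\{0,1\}^d$, with the same consistency checks, terminal check against the sketch, and soundness accounting.
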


Here are the two other protocols that act as building blocks for our graph verification protocols.

\paragraph*{Inverse Protocol (\finv).}
 Let $\mathbf{a} \in \integers^u$ be a (frequency) vector. The \emph{inverse frequency} function $F^{-1}_k$ for a fixed $k$ is the number of elements of $\mathbf{a}$ that have frequency $k$: $F^{-1}_k(\mathbf{a}) = |\{ i \mid \mathbf{a}_i = k\}|$. Let $h_k(i) = 1$ for $i=k$ and $0$ otherwise. We can then define $F^{-1}_k(\mathbf{a}) = \sum_i h_k(\mathbf{a}_i)$. Note that the domain of $h_k$ is $[M]$ where $M = \max_i \mathbf{a}_i$. We will refer to the problem of verifying a claimed value of $F^{-1}_k$ as \finv.
By using Lemma~\ref{lem::sum-check-protocol}, there is a simple SIP for \finv. We restate the related results here \cite{cormode2011verifying}. 

\begin{lemma}[\finv, \cite{cormode2011verifying}]
\label{lemma:finv}
Given stream updates to a vector $\mathbf{a} \in \integers^u$ such that $\max_i \mathbf{a}_i = M$ and a fixed integer $k$ there is a SIP to verify the claim $F^{-1}_k(\mathbf{a}) = K$ with cost $(\log^2 u, M\log^2 u)$ in $\log u$ rounds. 
\end{lemma}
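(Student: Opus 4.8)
The plan is to reduce \finv directly to the \sumcheck protocol of Lemma~\ref{lem::sum-check-protocol}. By definition, $F^{-1}_k(\mathbf{a}) = \sum_{i \in [u]} h_k(\mathbf{a}_i)$, where $h_k$ is the indicator of the value $k$ on the domain $[M]$ of possible coordinate values. The only thing preventing an immediate appeal to Lemma~\ref{lem::sum-check-protocol} is that the latter requires a genuine univariate polynomial $h$, whereas $h_k$ is so far just a table of values. So I would first replace $h_k$ by its unique interpolating polynomial $\hat h_k$ over $[M]$: since $|[M]| = M$ such a polynomial exists with $\deg(\hat h_k) < M$, and it has the explicit Lagrange form $\hat h_k(y) = \prod_{j \in [M],\, j \neq k} \frac{y - j}{k - j}$. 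Because every $\mathbf{a}_i$ lies in $[M]$, we have $\sum_i \hat h_k(\mathbf{a}_i) = \sum_i h_k(\mathbf{a}_i) = K$ exactly, so verifying the \finv claim is literally an instance of \sumcheck with $h = \hat h_k$ and $\deg(h) = O(M)$.

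Next I would run the protocol of Lemma~\ref{lem::sum-check-protocol} on this instance. It uses $O(\log u)$ rounds and has cost $\bigl(\log u \log|\field|,\ \deg(\hat h_k)\log u \log|\field|\bigr) = \bigl(\log u \log|\field|,\ O(M)\log u \log|\field|\bigr)$. Since by hypothesis $\log M = O(\log u)$, I can choose the extension field $\field$ to have size a sufficiently large polynomial in $u$; this makes $\log|\field| = O(\log u)$ and simultaneously drives the \sumcheck soundness error, which is $O(\deg(h)\log u / |\field|)$, below $1/3$ (indeed to $1/\mathrm{poly}(u)$). With this choice the cost becomes $(\log^2 u,\ M\log^2 u)$ and the round count $O(\log u)$, as claimed; completeness is perfect, inherited from \sumcheck.

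The remaining detail is to confirm that the verifier stays within $O(\log^2 u)$ space. During the stream it maintains only the \sumcheck sketch, i.e.\ a low-degree extension of $\mathbf{a}$ evaluated at a random point of $\field$, which is $O(\log^2 u)$ bits and is updated linearly on each stream token. At the end of the interaction the verifier must evaluate $\hat h_k$ at a single field element; using the product form above this is a running product of fewer than $M$ factors and needs only $O(\log|\field|) = O(\log u)$ additional space (the incidental $O(M)$ running time does not affect the space or communication bounds, and can be curtailed separately if desired). Hence no term beyond those already present in Lemma~\ref{lem::sum-check-protocol} is introduced.

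I expect the only ``obstacle'' here to be conceptual rather than computational: one must notice that $\deg(\hat h_k) = \Theta(M)$ is forced — $h_k$ vanishes on $M-1$ points of the evaluation domain — which is exactly where the $M$ factor in the communication comes from and is unavoidable for this indicator encoding of $F^{-1}_k$. Everything else is a routine instantiation of the \sumcheck machinery with an appropriately large field.
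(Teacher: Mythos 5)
Your proposal is correct and follows essentially the same route as the paper: the paper also defines the indicator $h_k$, observes $F^{-1}_k(\mathbf{a}) = \sum_i h_k(\mathbf{a}_i)$, and invokes the \sumcheck protocol of Lemma~\ref{lem::sum-check-protocol} with $\deg(h_k) = O(M)$ and $\log|\field| = O(\log u)$ to get the $(\log^2 u, M\log^2 u)$ bound. Your extra care about the Lagrange form of $\hat h_k$ and the verifier's space for evaluating it is a fine (and correct) elaboration of the same argument.
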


\textbf{Remark 1} Note that the same result holds if instead of verifying an inverse query for a single frequency $k$, we wish to verify it for a set of frequencies. Let $S \subset [M]$ and let $F^{-1}_S = |\{ i | \mathbf{a}_i \in S\}|$. Then using the same idea as above, there is a SIP for verifying a claimed value of $F^{-1}_S$ with costs given by Lemma~\ref{lemma:finv}. 

\textbf{Remark 2} Note that in the protocols presented in this paper later, the input to the \finv is not the graph edges itself, but instead the \finv is applied to the derived stream updates triggered by each input stream elements. As stated before, a single stream update of the form $(i, j, \Delta)$ might trigger updates in many entries of vector $\mathbf{a}$, which is defined based on the problem.

\paragraph*{\subgraph Protocol.}
We now present a new protocol for a variant of the vector equality test described in
Theorem~\ref{thm:multi-set-equality}. While this problem has been studied in the
annotation model, it requires space-communication product of $\Omega(u^2)$ communication in that setting. 

\begin{lemma}[\subgraph]
\label{lem:subgraph}
Let $E \subset [u]$ be a set of elements, and let $S \subset [u]$ be another set owned by \textsf{P}. There is a SIP to verify a claim that $S \subset E$ with cost $(\log^2 u, (|S| + \log u) \log u)$ in $\log u$
rounds.
\end{lemma}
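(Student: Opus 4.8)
The plan is to reduce the containment claim $S\subseteq E$ to an instance of \finv (equivalently, the sum-check based inverse-frequency protocol of Lemma~\ref{lemma:finv}), run in parallel with a Reed--Solomon fingerprint check that pins down exactly which set $S$ the prover is using. The key observation is that $S\subseteq E$ is equivalent to the following statement about a combined frequency vector: build a vector $\v{b}\in\integers^u$ indexed by $[u]$ where we increment $\v{b}[i]$ by $1$ for each $i$ seen in the stream defining $E$ (so $\v{b}[i]\in\{0,1\}$ since $E$ is a set), and then the prover streams its claimed set $S$, incrementing $\v{b}[i]$ by $1$ for each $i\in S$. Then $S\subseteq E$ holds if and only if no coordinate of $\v{b}$ equals $1$ among the coordinates touched by $S$ — more precisely, if and only if $F^{-1}_{\{1\}}(\v{b}) = |E| - |S|$ and $F^{-1}_{\{2\}}(\v{b}) = |S|$, i.e., every element the prover added already had frequency $1$ from $E$. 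Since $|E|$ is known to \textsf{V} (it can maintain $|E|$ in $O(\log u)$ space as the stream passes, or verify it cheaply), and $|S|$ is claimed by \textsf{P}, verifying the single inverse-frequency value $F^{-1}_{\{2\}}(\v{b})=|S|$ together with $F^{-1}_{\{0,1,2\}}$ accounting for the total already forces $S\subseteq E$: if some $x\in S\setminus E$ then $\v{b}[x]=1$, which shifts one unit of mass from the frequency-$2$ bucket to the frequency-$1$ bucket, and the claimed count $F^{-1}_{\{2\}}=|S|$ becomes unachievable.

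The steps, in order, are as follows. First, \textsf{V} reads the stream for $E$, maintaining the \finv sketch (a sum-check low-degree-extension evaluation at a random point $\v{r}$) of the vector $\v{b}$, as well as a Reed--Solomon fingerprint of $\v{b}$ restricted to $E$ and the count $|E|$; all of this is $O(\log u)$ words, i.e. $O(\log^2 u)$ bits, matching the claimed space. Second, after the stream, \textsf{P} sends $|S|$ and then transmits the elements of $S$ one at a time; \textsf{V} updates the same linear sketch by the increments triggered by each $s\in S$, at a cost of $O(|S|\log u)$ communication for the list plus $O(\log u)$ per element of local work. Third, \textsf{V} and \textsf{P} run the \finv/\sumcheck protocol of Lemma~\ref{lemma:finv} on $\v{b}$ to verify $F^{-1}_{\{2\}}(\v{b})=|S|$; since the maximum coordinate value $M$ of $\v{b}$ is $2$ (or a small constant, after we argue multiplicities stay bounded), the \finv cost is $(\log^2 u, M\log^2 u)=(\log^2 u,\log^2 u)$ in $\log u$ rounds. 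Adding the list-transmission term gives total communication $(|S|+\log u)\log u$ as claimed, and the round count stays $\log u$. Fourth, for soundness we need to rule out a cheating prover who sends a "good" list during the sketch-update phase but a different $S$ when it matters; this is where the Reed--Solomon fingerprint is used — \textsf{V} fingerprints the multiset the prover actually streams and cross-checks consistency, and the \mse guarantee of Theorem~\ref{thm:multi-set-equality} makes any substitution detectable with probability $1-1/\mathrm{poly}(u)$.

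The main obstacle I anticipate is bounding the coordinate values of $\v{b}$ so that the $M$ factor in the \finv cost is genuinely constant rather than something like $|S|$: if the prover is allowed to list a repeated element many times, a single coordinate of $\v{b}$ could blow up, inflating both the degree of the polynomial $h_{\{2\}}$ used in sum-check and hence the communication. The fix is to have \textsf{V} reject immediately if the prover's list has length $\neq |S|$ or — more robustly — to define $\v{b}$ over a slightly enlarged domain that also records, via the fingerprint, that the prover's list is a genuine \emph{set}; alternatively one observes that the honest behavior keeps $M\le 2$ and any deviation is caught by the combination of the fingerprint check and the arithmetic of the frequency buckets, so we may assume $M=O(1)$ without loss. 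A secondary subtlety is that $F^{-1}_{\{2\}}(\v{b})=|S|$ alone does not immediately give $S\subseteq E$ without also knowing that the prover's increments were all distinct and all at coordinates in $[u]$; handling this cleanly is exactly what forces the parallel fingerprint, and wiring the fingerprint's random evaluation point to be independent of the sum-check point (or reusing it carefully) is the one place where the error analysis needs a union bound, but it is routine.
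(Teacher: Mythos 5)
Your proposal is correct and is essentially the paper's own argument: both reduce $S\subseteq E$ to a single \finv query on the frequency vector obtained by superposing the $E$-stream with the prover's streamed $S$, the only difference being that the paper \emph{decrements} by $1$ for each element of $S$ and checks $F^{-1}_{-1}=0$ (counting elements of $S$ outside $E$ directly), whereas you increment and check $F^{-1}_{2}=|S|$. The paper's sign choice is marginally cleaner since it needs neither the claimed cardinality $|S|$ nor the list-length check that your bucket-counting argument relies on, but the cost and round bounds are identical.
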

\begin{proof}
Consider a vector $\mathbf{\bar{a}}$ with length $u$, in which the verifier does the following updates: for each element in set $E$, increment the corresponding value in vector $\mathbf{\bar{a}}$ by $+1$ and for each element in set $S$, decrements the corresponding value in vector $\mathbf{\bar{a}}$ by $-1$. Let the vector $\mathbf{a} \in \{0,1\}^u$ be the characteristic vector of $E$, and let $\mathbf{a'}$ be the characteristic vector of $S$. Thus, $\mathbf{\bar{a}} = \mathbf{a} - \mathbf{a'}$. By applying $F^{-1}_{-1}$ protocol on $\mathbf{\bar{a}}$, verifier can determine if $S \subset E$ or not. Note that in vector $\mathbf{\bar{a}}$, $M = 1$. Then the protocol cost follows by Lemma \ref{lemma:finv}.
\end{proof}

\section{Warm-up: Counting Triangles}
\label{sec:triangles}
The number of triangles in a graph is the number of induced subgraphs isomorphic
to $K_3$. Here we present a protocol to verify the number of triangles in a graph
presented as a dynamic stream of edges. We will assume that at the end of the
stream no edge has a net frequency greater than $1$.
\begin{enumerate}
\item \textsf{V} processes the input data stream consisting of tuples $(i,j,\Delta)$
  representing edges in the graph for $F^{-1}_3$  with respect to a
  vector $\mathbf{a}$ indexed by entries from $\mathcal{U} = \{
  (i,j,k) \mid i,j,k \in [n], i < j < k\}$. For each edge $e = (i, j,\Delta), i < j$ in the
  stream, \textsf{V} increments all entries $\mathbf{a}[(i, *, j)], \mathbf{a}[(*,
  i, j)]$ and $\mathbf{a}[(i, j, *)]$ by $\Delta$. Note that the input to $F^{-1}_3$ protocol is in fact these derived incremental updates from the original stream of edges in the input graph and not the tuples $(i,j,\Delta)$.
\item \textsf{P} sends the claimed value $c^*$ as the number of triangles in $G$. 
\item \textsf{V} checks the the correctness of the answer by running the verification protocol for $F^{-1}$ and checks if $F^{-1}_3  = c^*$.
\end{enumerate}
\begin{lemma}
  The above protocol correctly verifies (with a constant probability of error) the number of triangles in a graph with cost $(\log^2 n, \log^2 n)$.
\end{lemma}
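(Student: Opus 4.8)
The plan is to reduce the statement to a single invocation of the \finv protocol and then read off all parameters from Lemma~\ref{lemma:finv}. The only piece of real content is a combinatorial identity describing the vector $\mathbf{a}$ that the verifier (implicitly) maintains during the stream: I claim that once the whole stream has been processed, for every triple $i<j<k$ the entry $\mathbf{a}[(i,j,k)]$ equals exactly the number of edges of $G$ among the three pairs $\{i,j\}$, $\{i,k\}$, $\{j,k\}$. Granting this, $\mathbf{a}[(i,j,k)]=3$ holds precisely when $\{i,j,k\}$ induces a triangle, so $F^{-1}_3(\mathbf{a})$ is the number of triangles of $G$, which is what step~3 of the protocol verifies; by the input-model guarantee that the final multiplicity of every edge is $0$ or $1$, we also get $\max_i \mathbf{a}_i = 3$.

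To prove the identity I would fix a triple $i<j<k$ and an edge $e=(a,b)$ with $a<b$ whose net multiplicity at the end of the stream is $\Delta_e\in\{0,1\}$, and check which of the three wildcard updates triggered by $e$, namely $\mathbf{a}[(a,*,b)]$, $\mathbf{a}[(*,a,b)]$, $\mathbf{a}[(a,b,*)]$, can reach the admissible index $(i,j,k)$. Pattern $(a,b,*)$ reaches $(i,j,k)$ iff $(a,b)=(i,j)$ (any third coordinate larger than $j$ is admissible, in particular $k$); pattern $(a,*,b)$ reaches it iff $(a,b)=(i,k)$ (the middle coordinate is then forced to lie strictly between $i$ and $k$, which $j$ does); pattern $(*,a,b)$ reaches it iff $(a,b)=(j,k)$. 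These three cases fix disjoint pairs of coordinates, hence are mutually exclusive, and together they say that $e$ contributes to $\mathbf{a}[(i,j,k)]$ exactly when $e\in\{\{i,j\},\{i,k\},\{j,k\}\}$, the contribution being $\Delta_e$; summing over all edges gives the identity. This routing argument is the only real obstacle: one must be careful that, read over the strictly ordered index set, the three patterns partition precisely the three roles an edge can play in a triangle --- bottom edge, ``skip'' edge, top edge --- with no double counting and no omission. It is also worth noting in passing that intermediate values of $\mathbf{a}$ may exceed $3$ during the stream, but this is harmless, since the verifier keeps only a linear sketch, which depends on the final vector alone.

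With the identity in hand the rest is bookkeeping. The verifier's sole task during the stream is to maintain the linear sketch underlying the \finv protocol applied to $\mathbf{a}$ --- the wildcard updates performed implicitly as in Section~\ref{sec:overview} --- so its space is that of \finv. Instantiating Lemma~\ref{lemma:finv} with domain size $u=|\mathcal{U}|=\binom{n}{3}=\Theta(n^3)$ (hence $\log u=\Theta(\log n)$) and $M=3=O(1)$ yields perfect completeness (an honest prover announces $c^{*}=F^{-1}_3(\mathbf{a})$ and every check passes), constant soundness error (inherited from \sumcheck run over a field of size $\mathrm{poly}(n)$), $O(\log u)=O(\log n)$ rounds, and total cost $(\log^2 u,\, M\log^2 u)=(\log^2 n,\, \log^2 n)$, which is exactly the claimed bound.
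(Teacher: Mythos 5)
Your proposal is correct and follows the same route as the paper: the paper's proof is just the one-line observation that the maximum frequency in $\mathbf{a}$ is $3$ plus an invocation of Lemma~\ref{lemma:finv}, and your routing argument is a careful (and accurate) expansion of exactly that observation, with the cost bound read off from $u=\Theta(n^3)$ and $M=3$ as in the paper.
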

\begin{proof}
  Follows from Lemma~\ref{lemma:finv} and observation that maximum
  frequency of any entry in $\mathbf{a}$ is $3$. 
\end{proof}

\paragraph*{Verifier Update Time.}
Note that while this protocol and the other graph protocols which follows achieves very small space and communication costs, but the update time could be high (polynomial in $n$) since processing a single stream token may trigger updates in many entries of $\mathbf{a}$. But by using a nice trick found in \cite{chakrabartiverifiable}, the verifier time can be reduced to poly$\log n$. Here we state the main results which can be applied to all the protocols in this paper to guarantee poly$\log n$ verifier update time.
\begin{lemma}\label{verifier-time}
Assume a data stream $\tau$ in which each element triggers updates on multiple entries of vector $\mathbf{a}$, and each entry in this vector is indexed by a multidimensional vector with $b$ coordinates and let $\mathcal{U} \subseteq {[n^c]}^b$. In all the SIP protocols for graph problems in this paper, the updates in the form of $\mathbf{a}[(\beta_1, \beta_2, \cdots, \beta_q, *, \cdots, *)]$ (which is interpreted as: update all entries $\beta$ where $\beta \in \{(\beta_1, \cdots, \beta_q, s_1, \cdots, s_{b-q}) | s_i \in [n^c], i \in [b-q], (\beta_1, \cdots, \beta_q, s_1, \cdots, s_{b-q}) \in \mathcal{U} \} $) can be done in poly$\log n$ time.
\end{lemma}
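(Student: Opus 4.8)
The plan is to exploit the product structure of the low-degree extensions underlying every sketch in the paper. Recall that each sketch the verifier maintains is the evaluation, at a fixed random point $\mathbf{r}=(r_1,\ldots,r_b)\in\field^b$, of a low-degree extension $p(x_1,\ldots,x_b)=\sum_{u\in\mathcal{U}}\mathbf{a}[u]\,g_u(\mathbf{r})$ that is linear in $\mathbf{a}$. The first step is to record that in each of our constructions the basis polynomial $g_u$ \emph{factorizes over coordinates}: writing $u=(u_1,\ldots,u_b)$ we have $g_u(\mathbf{r})=\prod_{\ell=1}^{b}\phi_\ell(u_\ell)$, where each $\phi_\ell$ is a fixed univariate function of the integer $u_\ell$ determined by $\mathbf{r}$. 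For the Reed--Solomon fingerprints of Theorem~\ref{thm:multi-set-equality} one has $\phi_\ell(t)=\rho_\ell^{\,t}$ for a field element $\rho_\ell$ built from $\mathbf{r}$, and the multilinear extensions used inside the \sumcheck and \finv protocols (Lemma~\ref{lem::sum-check-protocol}, Lemma~\ref{lemma:finv}) admit the analogous product form after grouping the bits of the binary index by coordinate. Given $\mathbf{r}$, the verifier evaluates any single $\phi_\ell(t)$ in $\mathrm{poly}\log n$ field operations (fast exponentiation in the Reed--Solomon case).

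Next I would unfold what a rectangular update does to the running tally. When a stream token triggers the update $\mathbf{a}[(\beta_1,\ldots,\beta_q,*,\ldots,*)]$ by $\Delta$, the induced change to $p(\mathbf{r})$ equals
\[
\Delta\sum_{(s_1,\ldots,s_{b-q})\,:\,(\beta_1,\ldots,\beta_q,s_1,\ldots,s_{b-q})\in\mathcal{U}}\ \prod_{\ell=1}^{q}\phi_\ell(\beta_\ell)\ \prod_{\ell=1}^{b-q}\phi_{q+\ell}(s_\ell).
\]
The structural claim I would then verify, case by case across the protocols, is that for every rectangular update that actually occurs the set of admissible wildcard tuples is a Cartesian product of integer intervals $I_{q+1}\times\cdots\times I_b\subseteq[n^c]^{b-q}$: in the matching and TSP protocols each wildcarded coordinate simply ranges over all of $[n^c]$, while in triangle counting the single wildcard is constrained by the order conditions defining $\mathcal{U}=\{i<j<k\}$, which (the other two coordinates being fixed) confine it to an interval. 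Under this claim the sum factorizes,
\[
\Delta\left(\prod_{\ell=1}^{q}\phi_\ell(\beta_\ell)\right)\prod_{\ell=1}^{b-q}\left(\sum_{s\in I_{q+\ell}}\phi_{q+\ell}(s)\right),
\]
so it only remains to compute the one-dimensional partial sums $\sum_{s\in I}\phi(s)$ fast.

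Finally, each such partial sum is a truncated geometric series $\sum_{s=\alpha}^{\beta}\rho^{\,s}=(\rho^{\,\beta+1}-\rho^{\,\alpha})(\rho-1)^{-1}$ in the Reed--Solomon case, and in the multilinear case it is a sum of $\phi$ over a contiguous range which, after splitting that range into $O(\log n)$ dyadic blocks, collapses to a closed form on each block; either way it costs $\mathrm{poly}\log n$ field operations. Multiplying together the $b=O(1)$ (or at worst $O(\log n)$) resulting factors gives the tally update in $\mathrm{poly}\log n$ time, and since a single stream token triggers only $O(1)$ such rectangular updates this proves the lemma. The main obstacle I anticipate is not any individual computation but pinning down the structural claim uniformly: one must check, for each protocol and each kind of update it performs, that fixing the non-wildcard coordinates leaves a residual constraint that is genuinely a box rather than, say, a triangular region, and one must handle interval endpoints that are not powers of two --- the only place where the dyadic decomposition in the multilinear case needs a little care.
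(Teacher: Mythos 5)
Your approach---keep the standard product-form basis polynomials, observe that a wildcard update contributes $\Delta\prod_{\ell\le q}\phi_\ell(\beta_\ell)\prod_{\ell>q}\bigl(\sum_{s\in I_{q+\ell}}\phi_{q+\ell}(s)\bigr)$, and collapse each one-dimensional interval sum in closed form---is genuinely different from the paper's, and the obstacle you flag at the end is not a technicality but a real failure point. The structural claim that every admissible wildcard region is a Cartesian product of intervals is false for the weighted-matching protocols, which are the central results the lemma must serve. In Section~\ref{sec:maxWBipartite} the universe is $\mathcal{U}=\{((i,j),w,y_i,y_j): w\le y_i+y_j,\ldots\}$, so the update $\mathbf{a}[(e,w,*,*)]$ ranges over the simplex-like region $\{(y_i,y_j):y_i+y_j\ge w\}$, and the update $\mathbf{a}[((i,*),*,y_i,*)]$ ranges over tuples where the admissible interval for $w$ depends on the wildcarded $y_j$; the general-graph protocol of Section~\ref{sec:mwmg} adds the further non-product condition that $y+y'\ge w$ and $r>0$ not hold simultaneously. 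Over such regions your sum does not factorize, and a dyadic box decomposition of a staircase boundary in an $[n^c]\times[n^c]$ grid requires polynomially many boxes, so the multilinear case does not reduce to polylogarithmically many closed-form blocks. (For Reed--Solomon weights $\rho^{\mathrm{lin}(u)}$ one can still sum over a triangle by nesting geometric series, but the protocols that issue these rectangular updates are the \finv/\sumcheck ones built on Lagrange-basis extensions, where no such collapse is available.)

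The paper's proof avoids this entirely by changing which low-degree extension is evaluated rather than summing the standard one over the update set. It writes the whole update predicate $\phi(\boldsymbol{\beta},\mathbf{x})$---equality on the fixed coordinates \emph{together with} the arithmetic constraints defining membership in $\mathcal{U}$, such as $w\le y+y'$---as a de~Morgan formula of size $\mathrm{poly}\log n$ (comparisons and additions of $O(\log n)$-bit integers have polylog-size formulas), arithmetizes it gate by gate into $\tilde\phi$, and takes $\Psi(W)=\sum_i\tilde\phi(W,\mathbf{x}^{(i)})$ as the extension of $\mathbf{a}$. Each stream token then costs one evaluation of $\tilde\phi(\mathbf{r},\mathbf{x}^{(i)})$, which is $\mathrm{poly}\log n$ by the formula-size bound, for \emph{any} predicate with a small formula; the price is that the extension's degree rises from multilinear to $\mathrm{fsize}(\phi)=\mathrm{poly}\log n$, which the sum-check absorbs. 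To rescue your argument you would need to either restrict it to the protocols whose updates really are boxes (triangle counting, vertex cover, connectivity) and supply a separate mechanism for the constrained universes, or show how to sum a product-form basis over a region cut out by linear inequalities in polylog time---neither of which your proposal currently does.
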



Here we present the proof for Lemma \ref{verifier-time}. The main ideas are extracted from \cite{chakrabartiverifiable}, in which this trick is used for reducing verifier time in Nearest Neighbor verification problem. For more details, refer to Section 3.2 in \cite{chakrabartiverifiable}.
\begin{proof}
\label{verify-appendix}
 Suppose the boolean function $\phi$ which takes two vectors $\boldsymbol{\beta}$ and $\mathbf{x}$ as inputs, in which $\boldsymbol{\beta}=(\beta_1, \cdots, \beta_b)$ is a vector with $b$ coordinates each $\beta_i \in {[n]}^c$ and $\mathbf{x}=(x_1, \cdots, x_q)$ is a vector with $q <b$ coordinates each $x_i \in {[n]}^c$. Here we assume $\boldsymbol{\beta}$ is an index in the vector $\mathbf{a}$ defined over the input stream and $\mathbf{x}$ the update vector defined by the current stream element(i.e. specifies which indices in $\mathbf{a}$ must be updated). Define $\phi(\boldsymbol{\beta}, \mathbf{x} ) = 1 \leftrightarrow \beta_i = x_i , 1\leq \forall i\leq q$ with $O(\log n)$-bits inputs (since we can assume $b$ as a small constant). Let define the length of the shortest de Morgan formula for function $\phi$ as fsize$(\phi)$. Obviously, the function $\phi$ is essentially the equality check on O(logn)-bits input and we know that the addition and multiplication of $s$-bits inputs can be computed by Boolean circuits in depth $\log s$, resulting Boolean formula of size poly$(s)$. Thus, fsize$(\phi)=$ poly$\log n$. Considering the boolean formula for $\phi$, we associate a polynomial $\tilde{G}$ with each gate $G$ of this formula, with input variables $W_1, \cdots, W_{b \log n}$ and $X_1, \cdots, X_{q \log n}$, as follows:
 \begin{align*}
 G= \beta_i &\Rightarrow \tilde{G}= W_i\\
 G=x_i &\Rightarrow \tilde{G}= X_i\\
 G = \neg G_1 &\Rightarrow \tilde{G}= -\tilde{G}_1\\
 G = G_1 \wedge G_2 &\Rightarrow \tilde{G}= \tilde{G_1}\tilde{G_2}\\
 G= G_1 \vee G_2 &\Rightarrow \tilde{G}= 1-(1-\tilde{G_1}(1-\tilde{G_2}))
 \end{align*}
 Let $\tilde{\phi}(W_1, \cdots, W_{b \log n}, X_1, \cdots, X_{q \log n})$ to be the polynomial associated with the output gate, which is in fact the standard arithmetization of the formula. We consider $\tilde{\phi}$ as a polynomial defined over $\field[W_1, \cdots, W_{b \log n}, X_1, \cdots, X_{q \log n}]$ for a large enough finite field $\field$. By construction, $\tilde{\phi}$ has total degree at most fsize$(\phi)$ and agree  with $\phi$ on every Boolean input. Define the polynomial $\Psi (W_1, \cdots, W_{b \log n}) = \Sigma_{i=1} \tilde{\phi}((W_1, \cdots, W_{b \log n}), \mathbf{x}^{(i)})$, in which $\mathbf{x}^{(i)}$ is the update vector defined by the element $i$ in the stream.
 Now we can observe that the vector $\mathbf{a}$ defined by the stream updates, can be interpreted as follows:
 \begin{align*}
 a[\boldsymbol{\beta}] = \Sigma_{i=1} \phi(\boldsymbol{\beta},{\mathbf{x}}^{(i)})= \Sigma_{i=1} \tilde{\phi}(\boldsymbol{\beta},{\mathbf{x}}^{(i)})= \Psi(\boldsymbol{\beta})
 \end{align*}
 It follows that $\Psi$ is the extension of $\mathbf{a}$ to $\field$ with degree
 equal to fsize($\phi$) and can be defined implicitly by input stream. Also, the
 verifier can easily evaluate $\Psi(\mathbf{r})$ for some random point
 $\mathbf{r} \in \field^{b \log n}$, as similar to polynomial evaluation in
 \sumcheck protocol. Considering that fsize$(\phi)=$ poly$\log n$, the
 complexity result of update time follows. Note that this approach adds an extra space cost fsize$(\phi) = $poly $\log n$ for the size of Boolean formula, but in general this does not affect the total space cost of the protocols discussed in this paper.
 \end{proof}

\section{SIP for MAX-MATCHING in Bipartite Graphs}
\label{sec:mcmb}
We now  present a SIP for maximum cardinality matching in bipartite graphs. The prover \textsf{P} needs to generate two certificates: an actual matching, and a proof that this is optimal. By K\"{o}nig's theorem \cite{konig}, a bipartite graph has a maximum matching of size  $k$ if and only if it has a minimum vertex cover of size $k$. Therefore, \textsf{P}'s proof consists of two parts:
\begin{inparaenum}[a)]
\item Send the claimed optimal matching $M\subset E$ of size $k$
\item Send a vertex cover $S\subset V$of size $k$. 
\end{inparaenum}
\textsf{V} has three tasks:
\begin{inparaenum}[i)]
\item Verify that $M$ is a matching and that $M \subset E$.
\item Verify that $S$ covers all edges in $E$. 
\item Verify that $|M| = |S|$.
\end{inparaenum} We describe protocols for first two tasks and the third task is trivially solvable by counting the length of the streams and can be done in $\log n$ space. \textsf{V} will run the three protocols in parallel.

\paragraph*{Verifying a Matching.}
\label{sec:verifying-matching}
Verifying that $M \subset E$ can be done by running the \subgraph protocol from Lemma~\ref{lem:subgraph} on $E$ and the claimed matching $M$. A set of edges $M$ is a matching if each vertex has degree at most $1$ on the subgraph defined by $M$. Interpreted another way, let $\tau_M$ be the stream of endpoints of edges in $M$. Then each item in $\tau_M$ must have frequency $1$. This motivates the following protocol, based on Theorem \ref{thm:multi-set-equality}. \textsf{V} treats $\tau_M$ as a sequence of updates to a frequency vector $\mathbf{a} \in \integers^{|V|}$ counting the number of occurrences of each vertex. \textsf{V} then asks \textsf{P} to send a stream of all the vertices incident on edges of $M$ as updates to a different frequency vector $\mathbf{a}'$. \textsf{V} then runs the \mse protocol to verify that these are the same. 

\paragraph*{Verifying that $S$ is a Vertex Cover.}
The difficulty with verifying a vertex cover is that \textsf{V} no longer has streaming access to $E$. However, we can once again reformulate the verification in terms of frequency vectors. $S$ is a vertex cover if and only if each edge of $E$ is incident to some vertex in $S$. Let $\mathbf{a}, \mathbf{a}' \in \integers^{\binom{n}{2}}$ be  vectors  indexed by $\mathcal{U} = \{(i,j), i, j \in V, i  < j\}$. On receiving the input stream edge $e = (i,j, \Delta), i < j$, \textsf{V} increments $\mathbf{a}[(i,j)]$ by $\Delta$.

For each vertex $i \in S$ that \textsf{P} sends, we increment all entries $\mathbf{a}'[(i, *)]$ and $\mathbf{a}'[(*, i)]$. Now it is easy to see that $S$ is a vertex cover if and only there are no entries in $\mathbf{a - a'}$ with value $1$ (because these entries correspond to edges that have \emph{not} been covered by a vertex in $S$). This yields the following verification protocol. 
\begin{enumerate}
\item \textsf{V} processes the input edge stream for the $F^{-1}_{1}$ protocol, maintaining updates to a vector $\mathbf{a}$. 
\item \textsf{P} sends over a claimed vertex cover $S$ of size $c^*$ one vertex at a time.  For each vertex $i \in S$, \textsf{V} \emph{decrements} all entries $\mathbf{a}[(i, *)]$ and $\mathbf{a}[(*, i)]$.
\item \textsf{V} runs \finv to verify that $F^{-1}_1(\mathbf{a}) = 0$. 
\end{enumerate}

The bounds for this protocol follow from Lemmas \ref{lemma:finv}, \ref{lem:subgraph} and Theorem \ref{thm:multi-set-equality}: 
\begin{theorem}
Given an input bipartite graph with $n$ vertices, there exists a streaming interactive protocol  for verifying the maximum-matching with $\log n$ rounds of communication, and cost $(\log^2 n,   (c^* + \log n) \log n)$, where $c^*$ is the size of the optimal matching.
\end{theorem}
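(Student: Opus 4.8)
The plan is to run the three subprotocols already described --- the \mse-based matching check, the \subgraph containment check, and the \finv-based vertex-cover check --- together with a trivial stream-length comparison, all in parallel, and then to derive correctness from linear-programming duality plus K\"onig's theorem. Concretely, the honest prover sends a maximum matching $M \subseteq E$ with $|M| = c^*$ and a minimum vertex cover $S$, which by K\"onig's theorem also has size $c^*$; the prover additionally re-streams the endpoints of $M$ so that \textsf{V} can run \mse, and streams the vertices of $S$ so that \textsf{V} can run \finv on the difference vector.

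For \textbf{completeness} I would observe that when $M$ is genuinely a matching contained in $E$ and $S$ genuinely covers $E$, each subprotocol accepts with probability $1$: \subgraph (Lemma~\ref{lem:subgraph}) confirms $M \subseteq E$, the \mse protocol (Theorem~\ref{thm:multi-set-equality}) confirms that every vertex occurs at most once among the endpoints of $M$, the \finv protocol (Lemma~\ref{lemma:finv}) confirms $F^{-1}_1(\mathbf{a} - \mathbf{a}') = 0$, and the length comparison confirms $|M| = |S|$. Hence the protocol has perfect completeness and \textsf{V} outputs $c^*$.

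For \textbf{soundness} I would condition on the event --- of probability at least $1 - 1/\mathrm{poly}(n)$ by a union bound over the constantly many subprotocols --- that none of them is fooled. Then any accepted transcript certifies simultaneously that $M$ is a genuine matching, $M \subseteq E$, $S$ is a genuine vertex cover of $E$, and $|M| = |S| = k$. Weak LP duality (any matching is no larger than any vertex cover) then forces $\nu(G) \geq |M| = k$ and $\nu(G) \leq |S| = k$, hence $k = \nu(G)$, so \textsf{V} never outputs a wrong value except with probability below $1/3$, which can be pushed to any constant by enlarging the field $\field$. Note K\"onig's theorem is needed only for completeness (to guarantee a same-size cover exists); soundness uses only weak duality. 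The cost accounting is then routine: \subgraph contributes $(\log^2 n, (c^* + \log n)\log n)$, the matching \mse check contributes $O(\log n)$ space and $O(c^*\log n)$ communication, the vertex-cover \finv runs on a difference vector whose coordinates never exceed $1$ so by Lemma~\ref{lemma:finv} it contributes $(\log^2 n, \log^2 n)$ plus $O(c^*\log n)$ to send $S$, and the length check is $O(\log n)$ space; each uses $O(\log n)$ rounds and they run in parallel, giving $O(\log n)$ rounds overall and total cost $(\log^2 n, (c^* + \log n)\log n)$.

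The main obstacle I anticipate is not any single calculation but the soundness argument's reliance on \emph{independently} authenticating all three facts --- matching-in-$E$, cover-of-$E$, and equal cardinalities --- since omitting any one lets a cheating prover misreport the optimum (e.g.\ a small $S$ that misses an edge, or a ``matching'' with a repeated vertex). A related subtlety is keeping the \finv instance's maximum coordinate equal to $1$: this holds only because $F^{-1}_1$ is applied to $\mathbf{a} - \mathbf{a}'$ rather than to $\mathbf{a}'$ alone (whose entries can be as large as $n$), and it is exactly this observation that holds the vertex-cover communication to $\log^2 n$ rather than $n\log^2 n$.
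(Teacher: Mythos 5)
Your proposal matches the paper's proof: the same decomposition into \subgraph, the \mse endpoint-frequency check, the $F^{-1}_1$ test on the difference vector, and the cardinality comparison, with correctness via K\"onig/duality and costs read off from Lemmas~\ref{lemma:finv} and~\ref{lem:subgraph} and Theorem~\ref{thm:multi-set-equality}. In fact you supply more detail than the paper does (the split of weak duality for soundness versus K\"onig for completeness); the only nitpick is that entries of the difference vector can reach magnitude $2$ when both endpoints lie in $S$, not $1$, but this is still $O(1)$ so the stated costs are unaffected.
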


\section{SIP for MAX-WEIGHT-MATCHING in Bipartite Graphs}
\label{sec:maxWBipartite}

Consider now a bipartite graph with edge weights, with the goal being to compute a matching of maximum weight (the weight of the matching being the sum of the weights of its edges). Our verification protocol will introduce another technique we call ``flattening'' that we will exploit subsequently for matching in general graphs. 

Recall that we assume a ``dynamic update'' model for the streaming edges: each edge is presented in the form $(e, w_e, \Delta)$ where $\Delta \in \{+1, -1\}$. Thus, edges are inserted and deleted in the graph, but their weight is not modified. We will also assume that all weights are bounded by some polynomial $n^c$. 

As before, one part of the protocol is the presentation of a matching by \textsf{P}: the verification of this matching follows the same procedure as in Section~\ref{sec:verifying-matching} and we will not discuss it further. We now focus on the problem of certifying \emph{optimality} of this matching. 

For this goal, we proceed by the standard LP-duality for bipartite maximum weight matching. Let the graph be $G = (V,E)$ and $A$ is its incidence matrix (a matrix in $\{0 , 1 \}^{V \times E}$ where $a_{ij} = 1$ iff edge $j$ is incident to vertex $i$). Let $\delta(v)$ denote the edge neighborhood of a vertex $v$ and $P_{\text{match}}$ represent the convex combination of all matchings on $G$, and note that for a bipartite graph:

\begin{equation}
P_{\text{match}} = \left \{ x \in \reals^{E}_+ : \forall v \in V, 
\sum_{e \in \delta(v)} x_e \leq 1 \right \}
\end{equation}

 Applying the LP duality theorem to the bipartite max-weight matching problem on $G$, and letting $w$ be the weight vector on the edges, we see that: 

\begin{align*}
\max \{ w^T x : x \in P_{\text{match}} (G)  \}  
	&= \max \left \{ w^T x : x \geq 0 \text{ and } \forall v \in V ,
	\sum_{e \in \delta(v)} x_e \leq 1 \right \} \\
	&= \max \left \{ w^T x: x \geq 0, Ax \leq 1 \right \} \\
	&=  \min \left \{ 1^T y : A^T y \geq w, y \geq 0 \right \} \\
	&= \min \left \{1^T y : y \geq 0 \text{ and } \forall e_{i,j} \in E, 
y_i + y_j \geq w_{i,j} \right \}
\end{align*}

Considering this formulation, a certificate of optimality for a maximum weight matching of cost $c^*$ is an assignment of weights $y_i$ to vertices of $V$ such that $\sum y_i = c^*$ and for each edge $e = (i,j), y_i + y_j \ge w_e$. 

A protocol similar to the unweighted case would proceed as follows: \textsf{P} would send over a stream $(i, y_i)$ of vertices, and the verifier would treat these as decrements to a vector over edges. \textsf{V} would then verify that no element of the vector had a value greater than zero. However, by Lemma~\ref{lemma:finv}, this would incur a communication cost linear in the maximum weight (since that is the  maximum value of an element of this vector), which is prohibitively expensive. 

The key is to observe that the communication cost of the protocol depends linearly on the maximum value of an element of the vector, but only \emph{logarithmically} on the length of the vector itself. So if we can ``flatten'' the vector so that it becomes larger, but the maximum value of an element becomes smaller, we might obtain a cheaper protocol. 

Let $\mathbf{a}$ be indexed by elements of  $\mathcal{U} = \{ ((i,j), w, y_i, y_j) \mid (i,j) \in E, w, y_i, y_j \in [n^c], i < j, w \le y_i + y_j\}$. $|\mathcal{U}| = O(n^{3c+2})$. The protocol proceeds as follows.

Intuitively, each entry of $\mathbf{a}$ corresponds to a valid dual constraint. When \textsf{V} reads the input stream of edges, it will increment counts for all entries of $\mathbf{a}$ that \emph{could} be part of a valid dual constraint. Correspondingly, when \textsf{P} sends back the actual dual variables, \textsf{V} updates all compatible entries.

\begin{enumerate}
\item \textsf{V} processes the input edge stream for $F^{-1}_3$ (with respect to $\mathbf{a}$). 
\item Upon seeing  $(e, w_e, \Delta)$ in the stream, \textsf{V} accordingly updates all entries $\mathbf{a}[(e, w, *, *)]$ by $\Delta$.
\item \textsf{P} sends a stream of $(i, y_i)$ in increasing order of $i$. 
\item \textsf{V} verifies that all $i \in [n]$ appear in the list. For each $i$, it increments all entries  $\mathbf{a}[((i,*), *, y_i, *)]$ and $\mathbf{a}[(*, i), *, *, y_i)]$.
\item \textsf{V} verifies that  $F^{-1}_3(\mathbf{a}) = m$ and accepts.
\end{enumerate}

\paragraph*{Correctness.}

Suppose the prover provides a valid dual certificate satisfying the conditions for optimality. Consider any edge $e = (i,j)$, the associated dual variables $y_i, y_j$ and the entry $r = (e, w_{ij}, y_i, y_j)$. When $e$ is first encountered, \textsf{V} will increment $a[r]$. When \textsf{P} sends $y_i$, $r$ will satisfy the compatibility condition and $a[r]$ will be incremented. A similar increment will happen for $y_j$. Note that no other stream element will trigger an update of $a[r]$. Therefore, every satisfied constraint will yield an entry of $\mathbf{a}$ with value $3$. 

Conversely, suppose the constraint is not satisfied, i.e $y_i + y_j < w_{ij}$. There is no corresponding entry of $\mathbf{a}$ to be updated in this case. This proves that the number of entries of $\mathbf{a}$ with value $3$ is exactly the number of edges with satisfied dual constraints. The correctness of the protocol follows. 

\paragraph*{Complexity.}
The maximum frequency in $\mathbf{a}$ is at most $3$ and the domain size $u = O(n^{3c + 2})$. Note that this is in contrast with the representation first proposed that would have domain size $n^2$ and maximum frequency $O(n^c)$. In effect, we have \emph{flattened} the representation. Invoking Lemma \ref{lemma:finv}, as well as the bound for verifying the matching from Section~\ref{sec:mcmb}, we obtain the following result. 

\begin{theorem}\label{thm:maxBW}
Given a bipartite graph with $n$ vertices and edge weights drawn from $[n^c]$ for some constant $c$, there exists a streaming interactive protocol  for verifying the maximum-weight matching with $\log n$ rounds of communication, space cost $O(\log^2 n)$ and communication cost $O(n \log n)$.
\end{theorem}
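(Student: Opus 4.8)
The plan is to establish the theorem by invoking weak LP duality: if \textsf{P} exhibits a feasible primal solution (a matching) of value $c^*$ together with a feasible dual solution (a vertex weighting) of the same value $c^*$, then $c^*$ is the optimum, so it suffices to design SIPs that check (a) $M$ is a matching contained in $E$ with weight $\sum_{e\in M} w_e = c^*$, (b) the vertex weights $\{y_i\}$ form a feasible dual, i.e. $y_i + y_j \ge w_{ij}$ for every $e_{ij}\in E$, and (c) $\sum_i y_i = c^*$. All three checks run in parallel on the single pass, and I would amplify each subprotocol's error to a small constant so that a union bound keeps the overall soundness error below $1/3$; completeness is perfect since an honest prover passes every check deterministically.

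For (a) I would reuse the matching-verification machinery of Section~\ref{sec:mcmb} essentially verbatim: \subgraph (Lemma~\ref{lem:subgraph}) to confirm $M\subseteq E$, and the \mse test (Theorem~\ref{thm:multi-set-equality}) on the stream of endpoints of $M$ to confirm each vertex has degree at most one. The only new wrinkle is the weights: I would have \textsf{P} stream each matching edge together with its claimed weight, run \subgraph over the universe $[n]\times[n]\times[n^c]$ so that each (edge, weight) pair is verified against the true edge stream, and let \textsf{V} accumulate $\sum_{e\in M} w_e$ as a running total to check it equals $c^*$. This costs $(\log^2 n,\ (c^*+\log n)\log n)$; since $c^*\le n$ this is within the claimed budget.

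The crux is (b), and here I would use the \emph{flattening} device already sketched above. Naively treating the dual constraints as a vector indexed by edges forces an entry to hold a value as large as $n^c$, which by Lemma~\ref{lemma:finv} would cost communication $\Omega(n^c)$. Instead I index $\mathbf{a}$ by the larger universe $\mathcal{U}=\{((i,j),w,y_i,y_j): (i,j)\in E,\ i<j,\ w,y_i,y_j\in[n^c],\ w\le y_i+y_j\}$, which has size $O(n^{3c+2})$ but in which no coordinate ever exceeds $3$. On reading edge $(e,w_e,\Delta)$ \textsf{V} bumps every entry $\mathbf{a}[(e,w_e,*,*)]$ (legal only when $w_e\le y_i+y_j$, by the definition of $\mathcal{U}$), and on receiving the dual value $y_i$ it bumps $\mathbf{a}[((i,*),*,y_i,*)]$ and $\mathbf{a}[((*,i),*,*,y_i)]$. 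I would then argue that the entry $(e,w_e,y_i,y_j)$ reaches value $3$ exactly when $w_e\le y_i+y_j$ and \textsf{P} reported those particular dual values for $i$ and $j$, that this is the only entry an edge can push to $3$, and that every other entry peaks at $\le 2$; hence $F^{-1}_3(\mathbf{a})$ equals the number of satisfied dual constraints, which equals $m=|E|$ iff the dual is feasible. Applying Lemma~\ref{lemma:finv} with maximum frequency $3$ and $\log|\mathcal{U}|=O(\log n)$ gives cost $(\log^2 n,\log^2 n)$ in $\log n$ rounds, and \textsf{V} simultaneously accumulates $\sum_i y_i$ to settle (c); the wildcard updates are made efficient via Lemma~\ref{verifier-time}. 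Putting the pieces together, verifier space is $O(\log^2 n)$, communication is dominated by streaming the matching itself, $O((c^*+\log n)\log n)=O(n\log n)$, and every subprotocol finishes within $\log n$ rounds.

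The step I expect to be the main obstacle — or at least the one needing the most care — is the bookkeeping in (b): verifying that the universe restriction $w\le y_i+y_j$ exactly kills the edge-increment for a \emph{violated} constraint (so it tops out at $2$, not $3$), while a \emph{satisfied} constraint receives all three increments and nothing spurious reaches $3$. Getting these case distinctions right is what makes the flattened $F^{-1}_3$ count faithfully reflect dual feasibility, and it is also where one must confirm that the dramatic increase in $|\mathcal{U}|$ is harmless because $\log|\mathcal{U}| = O(\log n)$.
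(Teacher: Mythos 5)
Your proposal is correct and follows essentially the same route as the paper: K\H{o}nig/LP-duality certification, the flattened universe $\mathcal{U}=\{((i,j),w,y_i,y_j): w\le y_i+y_j\}$ with maximum frequency $3$, the $F^{-1}_3(\mathbf{a})=m$ feasibility test via Lemma~\ref{lemma:finv}, and reuse of the Section~\ref{sec:mcmb} machinery for the matching itself. The only (welcome) extra detail is your explicit handling of the claimed edge weights in the matching via \subgraph over $(e,w_e)$ pairs, which the paper leaves implicit.
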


We can make a small improvement to Theorem \ref{thm:maxBW}. First note that the prover need only send the non-zero $y_i$ in ascending order along with label to the verifier, who can implicitly assign $y_j = 0$ to all absent weights. This then reduces the communication to be linear in the \emph{cardinality} and thereby also the cost of the maximum weight matching. Namely, we now have:

\begin{theorem}\label{thm:maxBWImp}
Given an input bipartite graph with $n$ vertices and edge weights drawn from $[n^c]$ for some constant $c$, there exists a streaming interactive protocol  for verifying the maximum-weight matching with $\log n$ rounds of communication, space cost $O(\log^2 n)$ and communication cost $O(c^* \log n)$, where $c^*$ is the cardinality of the optimal matching over the input.
\end{theorem}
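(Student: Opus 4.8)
The plan is to revisit the protocol from Theorem~\ref{thm:maxBW} and observe that the only source of $\Theta(n\log n)$ communication is step (3), where \textsf{P} transmits one pair $(i,y_i)$ for every vertex $i\in[n]$, and the only reason every vertex must appear is that \textsf{V} needs to know $y_i$ in order to update the entries $\mathbf{a}[((i,*),*,y_i,*)]$ and $\mathbf{a}[((*,i),*,*,y_i)]$. First I would have \textsf{P} send only the pairs $(i,y_i)$ with $y_i\neq 0$, in ascending order of $i$, together with their labels; let $T\subseteq[n]$ be the set of vertices sent. The verifier then adopts the convention that every vertex not appearing in the transmitted list has dual value $0$, which is exactly what one expects at an optimal dual solution (vertices touched by no matching edge, or covered with slack, can be zeroed out), and then runs the same $F^{-1}_3$ check on $\mathbf{a}$.

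The key steps, in order, are: (1) argue that the restricted transmission still specifies a complete dual assignment $y\in\mathbb{R}_+^{V}$ via the zero-padding convention, so that ``$y_i+y_j\ge w_{ij}$ for all $e=(i,j)\in E$'' and ``$\sum_i y_i = c^*$'' remain well-defined statements that \textsf{V} can check; (2) re-examine the correctness argument of Theorem~\ref{thm:maxBW} under this convention — for an edge $e=(i,j)$ with associated entry $r=(e,w_{ij},y_i,y_j)$, the increment from reading $e$ in the stream still happens, and the increments corresponding to $y_i$ and $y_j$ still happen, where now a ``missing'' vertex simply contributes the update at dual value $0$; since the universe $\mathcal{U}$ already restricts to tuples with $w\le y_i+y_j$, the count-$3$ entries still correspond bijectively to satisfied constraints, so $F^{-1}_3(\mathbf{a})=m$ iff all $m$ dual constraints hold; (3) observe that since at least $c^*$ is the cost of the matching and $\sum_i y_i = c^*$ with $y\ge 0$, we have $|T|\le c^*$ and each $y_i\le n^c$, so each transmitted pair costs $O(\log n)$ bits and the total communication in step (3) drops to $O(c^*\log n)$; (4) combine this with the verification-of-matching cost from Section~\ref{sec:verifying-matching}, which is already $O((c^*+\log n)\log n)=O(c^*\log n)$ after the same sparse-transmission trick (or simply bounded by the size of the matching itself), and with the $O(\log^2 n)$ space and $\log n$ rounds inherited from Lemma~\ref{lemma:finv} applied to a vector with $M=3$; this yields the claimed $(\log^2 n,\, c^*\log n)$ SIP in $\log n$ rounds.

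The one subtlety worth flagging — the ``main obstacle,'' though it is more a point requiring care than a genuine difficulty — is completeness: we must be sure that an \emph{honest} optimal dual solution can indeed be taken to have support of size at most $c^*$ and to be consistent with the zero-padding convention. Here one appeals to complementary slackness: there is an optimal dual $y$ supported only on vertices that are matched endpoints of the optimal primal matching $M$ (saturated vertices), of which there are exactly $|M|=c^*$; moreover $\sum_i y_i = c^*$ forces, together with $y\ge 0$, that the support has size at most $c^*$. So \textsf{P} can always produce a certificate meeting the format, giving $\eps_c=0$. On the soundness side nothing changes beyond the observation that \textsf{V} must still read off $c^*$ correctly; but $c^*$ is determined by the claimed matching's length and by $\sum_{i\in T} y_i$, both of which \textsf{V} computes directly, so a cheating prover who pads incorrectly is caught either by the $F^{-1}_3$ check or by the mismatch $|M|\neq\sum_i y_i$, exactly as in the unweighted König protocol of Section~\ref{sec:mcmb}.
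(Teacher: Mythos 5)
Your proposal is correct and takes essentially the same route as the paper, which likewise has \textsf{P} transmit only the nonzero $(i,y_i)$ pairs in ascending order of label, lets \textsf{V} implicitly assign $y_j=0$ to every absent vertex, and bounds the support of the dual by the matched endpoints so that the certificate length becomes $O(c^*\log n)$. (Two harmless slips that do not affect the bound: the saturated vertices number $2|M|$ rather than $|M|$, and in the weighted setting the duality check is $w(M)=\sum_i y_i$, not $|M|=\sum_i y_i$.)
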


\paragraph*{Note.}
We assume that \textsf{V} knows the number of edges in the graph. This assumption can be dropped easily by merely summing over all updates $\Delta$. Since we assume that every edge will have a final count of $1$ or $0$, this will correctly compute the number of edges at the end of the stream. 



\section{SIP for Maximum-Weight-Matching in General Graphs}
\label{sec:mwmg}
We now turn to the most general setting: of maximum weight matching in general graphs.
This of course subsumes the easier case of maximum cardinality matching in general graphs, and while there is a slightly simpler protocol for that problem based on the Tutte-Berge characterization of maximum cardinality matchings \cite{tutte,berge}, we will not discuss it here.

We will use the odd-set based LP-duality characterization of maximum weight matchings due to Cunningham and Marsh. 
Let $\mathbb{O}(V)$ denote the set of all odd-cardinality subsets of $V$
Let $y_i \in [n^c]$ define non-negative integral weight on vertex $v_i$,   $z_U \in [n^c]$ define a non-negative integral weight on an \emph{odd-cardinality} subset $U \in \mathbb{O}(V)$,  $w_{ij} \in [n^c]$ define the weight of an edge $e = (i,j)$ and $c^* \in [n^{c+1}]$ be the weight of a maximum weight matching on $G$. 
We define $y$ and $z$ to be \emph{dual feasible} if $y_i+ y_j + \sum_{\substack{ U \in \odd(V)  \\ i,j \in  U }   } z_U \geq w_{i,j}, \forall i,j $

A collection of sets is said to be \emph{laminar}, if any two sets in the collection are either disjoint or nested (one is contained in the other). Note that such a family must have size linear in the size of the ground set. Standard LP-duality and the Cunningham-Marsh theorem state that:

\begin{theorem}[\cite{cm}]
\label{thm:cm}
For every integral set of edge weights $W$, and choices of dual feasible integral vectors $y$ and $z$,
$c^* \leq \sum_{v\in V} y_v +\sum_{U \in \odd(V)} z_U  \left \lfloor \frac{1}{2} |U| \right \rfloor.$
Furthermore, there exist vectors $y$ and $z$ that are dual feasible such that $\{ U : z_U > 0 \}$ is laminar and for which  the above upper bound achieves equality.
\end{theorem}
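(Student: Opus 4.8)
The statement bundles together a weak-duality bound and the substantive Cunningham--Marsh assertion (existence of a \emph{tight, laminar, integral} dual certificate), and I would prove the two parts separately.

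For the inequality $c^* \le \sum_v y_v + \sum_U z_U \lfloor |U|/2 \rfloor$, I would first record that the dual-feasibility condition in the statement is exactly feasibility for the dual of the matching LP
\[
\max\Bigl\{ w^\top x : x \ge 0,\ \textstyle\sum_{e \in \delta(v)} x_e \le 1\ \forall v,\ \sum_{e \subseteq U} x_e \le \lfloor |U|/2\rfloor\ \forall U \in \odd(V)\Bigr\}.
\]
Given a maximum-weight matching $M$ (whose indicator vector is primal feasible) and any dual feasible $(y,z)$, summing the edge constraints over $e=(i,j)\in M$ gives $c^* = \sum_{(i,j)\in M} w_{ij} \le \sum_v y_v \deg_M(v) + \sum_U z_U\,|\{e\in M: e\subseteq U\}|$, and the claim follows from $\deg_M(v)\le 1$ and the elementary fact that a matching has at most $\lfloor |U|/2\rfloor$ edges inside a vertex set of odd size $|U|$. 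This part is routine.

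For the second part I would invoke Edmonds' description of the matching polytope, so that the optimum of the LP above equals $c^*$; strong LP duality then yields an optimal dual $(y,z)$, whose optimality \emph{is} the stated equality, and complementary slackness pins down the structure of $M$ (it uses only tight edges, saturates every $v$ with $y_v>0$, and meets every $U$ with $z_U>0$ in exactly $\lfloor|U|/2\rfloor$ edges). The remaining content is that the optimal dual can be taken \emph{integral} with $\mathcal{F}=\{U: z_U>0\}$ \emph{laminar}. My preferred route is the blossom algorithm: Edmonds' weighted matching algorithm maintains an integral dual whose positive-$z$ sets are precisely the current (shrunken) blossoms, and blossoms are nested by construction, hence form a laminar family; termination yields an optimal integral laminar dual. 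Alternatively one can give a direct proof by induction on $\sum_e w_e$ (or on the number of distinct weights), peeling off a maximum-weight subgraph / shrinking a blossom and checking that the lifted dual stays feasible, integral, and laminar.

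The main obstacle is the integrality of the optimal dual --- equivalently, total dual integrality of the odd-set matching system --- which is the genuinely nontrivial ingredient; laminarity is comparatively cheap once it is seen to come from nested blossoms. An appealing shortcut is to uncross a generic optimal dual (shifting weight between two crossing odd sets $U_1, U_2$ to $U_1\cap U_2, U_1\cup U_2$ or to $U_1\setminus U_2, U_2\setminus U_1$ depending on the parity of $|U_1\cap U_2|$, and minimizing a potential such as $\sum_U z_U|U|^2$), but the case where the crossing sets have an \emph{even} intersection requires care to preserve feasibility of edges lying inside it, which is why routing through the algorithm (or a dedicated induction) is the cleaner path.
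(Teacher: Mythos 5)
The paper does not prove this statement at all: it is imported verbatim from Cunningham and Marsh \cite{cm} as a known characterization, so there is no internal proof to compare yours against. Judged on its own terms, your outline is sound and matches the standard treatments. The weak-duality half is correct as written: the feasibility condition $y_i+y_j+\sum_{U\ni i,j}z_U\ge w_{ij}$ is exactly the dual constraint of the odd-set LP, and summing it over a maximum matching together with $\deg_M(v)\le 1$, $y,z\ge 0$, and the fact that at most $\lfloor|U|/2\rfloor$ matching edges lie inside $U$ gives the bound. For the second half you correctly isolate the real content --- total dual integrality of the odd-set system --- and both routes you name (Edmonds' blossom algorithm, which maintains an integral dual supported on nested blossoms when the weights are integral, or induction on $\sum_e w_e$ as in Schrijver's proof) are the accepted ways to establish it; your caveat about the even-intersection case in the uncrossing argument is also the right thing to worry about. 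As a full proof one would still need to carry out the chosen route in detail, but as a blind reconstruction of why the cited theorem holds, this is accurate and appropriately flags where the difficulty lies.
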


We design a protocol that will verify that each dual edge constraint is satisfied by the dual variables. The laminar family $\{ U : z_U > 0 \}$ can be viewed as a collection of nested subsets (each of which we call a \emph{claw}) that are disjoint from each other. Within each claw, a set $U$ can be described by giving each vertex $v$ in order of increasing \emph{level} $\ell(v)$: the number of sets $v$ is contained in (see Figure \ref{fig:laminar}).
\begin{figure}[h!]
\centering
\includegraphics[width=0.5\textwidth]{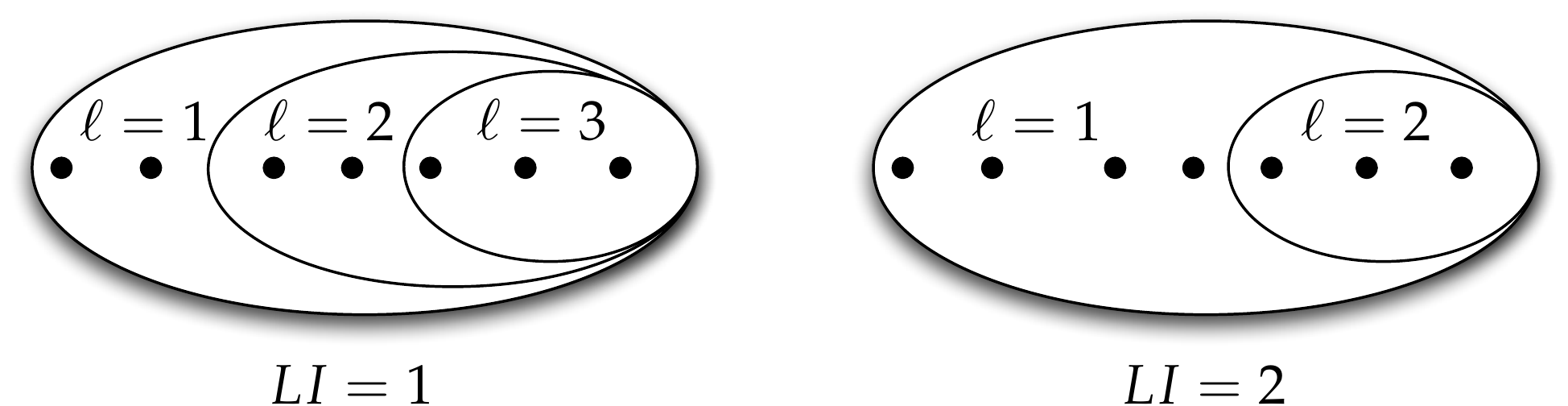}
\caption{A Laminar family\label{fig:laminar}}
\end{figure}
The prover will describe a set $U$ and its associated $z_U$ by the tuple $(LI, \ell, r_U, \partial U)$, where $1 \leq LI \leq n$ is the index of the claw $U$ is contained in, $\ell = \ell(U)$, $r_U = \sum_{U' \supseteq U'} z_{U'}$ and $\partial U = U \setminus \cup_{U'' \subset U} U''$. For an edge $e = (i,j)$ let $r_e = \sum_{i,j \in U, U \in \odd(V)} z_U$ represent the weight assigned to an edge by weight vector $z$ on the laminar family. Any edge whose endpoints lie in different claws will have $r_e = 0$. For a vertex $v$, let $r_v  = \min_{v \in U} r_U$. For an edge $e = (v,w)$ whose endpoints lie in the same claw, it is easy to see that $r_e = \min(r_v, r_w)$, or equivalently that $r_e = r_{\arg \min (\ell(v), \ell(w))}$. For such an edge, let  $\ell_{e, \downarrow} = \min (\ell(u), \ell(v))$ and $\ell_{e, \uparrow} = \max (\ell(u), \ell(v))$. We will use $LI(e) \in [n]$ to denote the index of the claw that the endpoints of $e$ belong to. 
\paragraph*{The Protocol.}
\textsf{V} prepares to make updates to a vector $\mathbf{a}$ with entries indexed by $\mathcal{U} = \mathcal{U}_1 \cup \mathcal{U}_2$.  $\mathcal{U}_1$ consists of all tuples of the form $ \{ (i,j, w, y, y', LI, \ell, \ell', r)\}$ and $\mathcal{U}_2$ consists of all tuples of the form $ \{ (i,j,w, y, y',0,0,0,0) \}$
 where $i < j, i, j, LI, \ell, \ell' \in [n]$, $y, y',r, w \in [n^c]$ and tuples in $\mathcal{U}_1$ must satisfy  1) $w \leq y + y' + r$ and 2) it is \emph{not} simultaneously true that $y + y' \geq w$ and $r > 0$. Note that  $\mathbf{a} \in \integers^u$ where $u = O(n^{4c + 5})$ and all weights are bounded by $n^c$. 
\begin{enumerate}
\item \textsf{V} prepares to process the stream for an $F^{-1}_5$ query. When $\textsf{V}$ sees an edge update of form $(e, w_e, \Delta)$, it updates all entries  $\mathbf{a}[(e, w_e, *, *, *, *, *, *)]$. 
\item \textsf{P} sends a list of vertices $(i, y_i)$ in order of increasing $i$. For each $(i, y_i)$, \textsf{V} increments by $1$ the count of all entries $\mathbf{a}[(i, *, *, y_i, *, *, * , * , *]$ and  $\mathbf{a}[(*,i,*, *, y_i, *,*, *,*)]$ with indices drawn from $\mathcal{U}_1$. Note that \textsf{P} only sends vertices with nonzero weight, but since they are sent in increasing order, \textsf{V} can infer the missing entries and issue updates to $\mathbf{a}$ as above. \textsf{V} also maintains the sum of all $y_i$. 
\item \textsf{P} sends the description of the laminar family in the form of tuples $(LI, \ell, r_U, \partial U)$, sorted in lexicographic order by $LI$ and then by $\ell$. \textsf{V} performs the following operations.
  \begin{enumerate}
  \item \textsf{V} increments all entries of the form $(i, *,*, y_i, *, 0, 0 , 0 , 0)$ or $(*,i,*, *, y_i, 0,0, 0,0)$ by $2$ to account for edges which are satisfied by only vector $y$.
  \item \textsf{V} maintains the sum $\Sigma_R$ of all $r_U$ seen thus far. If the tuple is deepest level for a given claw (easily verified by retaining a one-tuple lookahead) then \textsf{V} adds $r_U$ to another running sum $\Sigma_{\max}$. 
  \item \textsf{V} verifies that the entries appear in sorted order and that $r_U$ is monotone increasing. 
  \item \textsf{V} updates the fingerprint structure from Theorem \ref{thm:multi-set-equality} with each vertex in $\partial U$. 
  \item  For each $v \in \partial U$, \textsf{V} increments (subject to our two constraints on the universe) all entries of $\mathbf{a}$ indexed by tuples of the form $(e, w_e, *, *, LI, *, \ell, *)$ and all entries indexed by tuples of the form $(e, w_e, *, *, LI, \ell, *, r_U)$, where $e$ is any edge containing $v$ as an endpoint. 
\item \textsf{V} ensures all sets presented are odd by verifying that for each $LI$, all $|\partial U|$ except the last one are even. 
  \end{enumerate}
\item \textsf{P} sends \textsf{V} all vertices participating in the laminar family in ascending order of vertex label. \textsf{V} verifies that the fingerprint constructed from this stream matches the fingerprint constructed earlier, and hence that all the claws are disjoint. 
\item \textsf{V} runs a verification protocol for $F^{-1}_5(\mathbf{a})$ and accepts if $F^{-1}_5(\mathbf{a}) = m$, returning $\Sigma_r$ and $\Sigma_{\max}$.
\end{enumerate}

Define $c^s$ as the certificate size, which is upper bounded by the matching cardinality. Then:
\begin{theorem}
  Given dynamic updates to a weighted graph on $n$ vertices with all weights bounded polynomially in $n$, there is a SIP with cost $(\log^2 n, (c^s + \log n) \log n)$, where $c^s$ is the cardinality of maximum matching, that runs in $\log n$ rounds and verifies the size of a maximum weight matching.
\end{theorem}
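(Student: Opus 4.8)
The plan is to verify correctness and then bound the costs, mirroring the structure already established for the bipartite cases. For correctness, I would argue that the construction of $\mathbf{a}$ makes each entry have frequency exactly $5$ precisely when it corresponds to a genuinely satisfied dual edge constraint, so that $F^{-1}_5(\mathbf{a}) = m$ holds iff every one of the $m$ edges is dual-feasible. Concretely, for an edge $e = (i,j)$ sitting inside claw $LI(e)$ with dual values $y_i, y_j$ and with $r_e = \min(r_{v},r_{w}) = r_{\arg\min(\ell(v),\ell(w))}$, the target index is $(i,j,w_e,y_i,y_j,LI(e),\ell_{e,\downarrow},\ell_{e,\uparrow},r_e)$. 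Step 1 contributes one increment (the edge itself), step 2 contributes two (one each for $y_i$ and $y_j$), and step 3(e) contributes two more: one from the vertex realizing $\ell_{e,\downarrow}$ via the $(e,w_e,*,*,LI,*,\ell,*)$ pattern and one from the vertex realizing the level tied to $r_e$ via the $(e,w_e,*,*,LI,\ell,*,r_U)$ pattern. That is five. The two universe constraints — $w \le y + y' + r$ and the exclusion of the case ($y+y' \ge w$ and $r > 0$) — are exactly what prevents double-counting: an edge already satisfied by $y$ alone is handled by the $\mathcal{U}_2$ tuples and step 3(a) (one edge increment plus $2+2$), while an edge needing a positive $r_e$ lives only in $\mathcal{U}_1$, and the two cases never both fire for the same index. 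I would check that no other stream token touches a given target index, and conversely that a violated constraint $y_i + y_j + r_e < w_e$ has no legal index in $\mathcal{U}$ at all, so it cannot reach frequency $5$.

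Next I would verify that the auxiliary checks force the prover's claimed $z$ to be a laminar family of odd sets. The lexicographic-order and monotonicity checks in step 3(c) ensure each claw is a genuinely nested chain with $r_U = \sum_{U' \supseteq U} z_{U'}$ increasing inward; step 3(f) ensures each set $U = \partial U \cup (\text{inner sets})$ has odd cardinality by checking that all but the innermost $|\partial U|$ are even; and the fingerprint cross-check (steps 3(d) and 4, via Theorem~\ref{thm:multi-set-equality}) ensures the claws are vertex-disjoint, i.e. the family really is laminar. Given a valid laminar certificate, $\Sigma_{\max}$ accumulates $\sum_U z_U \lfloor |U|/2 \rfloor$ correctly because within a chain the innermost set's $r_U$ equals the full sum of $z_{U'}$ along it, and $\Sigma_R$ together with $\sum y_i$ gives the claimed optimum; soundness then follows from the Cunningham–Marsh theorem (Theorem~\ref{thm:cm}): completeness uses the guaranteed existence of a laminar tight dual, and any cheating prover is caught either by the $F^{-1}_5$ verifier, the fingerprint test, or the order/parity checks, each failing with at most the stated error, and a union bound over the constantly-many sub-protocols keeps the total error below $1/3$.

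For the complexity bound, the vector $\mathbf{a}$ has length $u = O(n^{4c+5})$, so $\log u = O(\log n)$, and its maximum coordinate value is $5$. Applying Lemma~\ref{lemma:finv} with $M = 5$ gives an $F^{-1}_5$ protocol with cost $(\log^2 u, 5\log^2 u) = (\log^2 n, \log^2 n)$ in $\log n$ rounds; the \mse fingerprint of Theorem~\ref{thm:multi-set-equality} adds only $O(\log n)$ space. The dominant communication term is the transmission of the certificate itself: the list of nonzero $y_i$, the laminar-family tuples $(LI,\ell,r_U,\partial U)$, and the vertex list for the fingerprint cross-check, each tuple being $O(\log n)$ bits. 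Since a laminar family on a ground set of size $n$ has $O(n)$ sets and the nonzero dual support is bounded by the matching cardinality $c^s$ (by complementary slackness, exactly as in the bipartite improvement of Theorem~\ref{thm:maxBWImp}), this is $O((c^s + \log n)\log n)$ bits, matching the claim. Verifier update time is kept polylogarithmic by Lemma~\ref{verifier-time}, since every wildcard update is of the prefix-pattern form that lemma handles.

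The main obstacle I expect is the correctness bookkeeping in step 3(e): one must be completely careful that the two wildcard patterns, restricted by the two universe constraints, contribute exactly $2$ (not $1$, not $3$) to the single correct index for each edge inside a claw — in particular handling the degenerate case where $\ell(v) = \ell(w)$ or where $r_e$ is realized at more than one level — and that edges straddling two claws (with $r_e = 0$) are routed entirely through the $\mathcal{U}_2$/step-3(a) machinery without any $\mathcal{U}_1$ index ever reaching frequency $5$ spuriously. Getting the universe constraints to do exactly this partitioning, with no index reachable by two different edges and no edge reaching two indices, is the delicate part; everything else follows the bipartite template and Cunningham–Marsh.
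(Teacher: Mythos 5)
Your proposal is correct and follows essentially the same route as the paper's proof: the same universe $\mathcal{U}_1\cup\mathcal{U}_2$ with the same two feasibility constraints, the count-to-$5$ argument via $F^{-1}_5$ for each satisfied dual edge constraint (with the $r_e>0$ and $r_e=0$ cases routed through $\mathcal{U}_1$ and $\mathcal{U}_2$ exactly as in the paper), the fingerprint/parity/monotonicity checks to certify a laminar family of odd sets, the $\Sigma_r,\Sigma_{\max}$ telescoping to recover $\sum_U z_U\lfloor|U|/2\rfloor$, and the identical appeal to Cunningham--Marsh and to Lemma~\ref{lemma:finv} for the cost bounds. The only caveat is the one you flag yourself: in step 3(e) the pattern $(e,w_e,*,*,LI,*,\ell,*)$ is the one triggered by the upper-level endpoint and $(e,w_e,*,*,LI,\ell,*,r_U)$ by the lower-level endpoint realizing $r_e$, so the two increments come from the two distinct endpoints' sets rather than as you attribute them, but this does not change the count of five.
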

\begin{proof}
In parallel, \textsf{V} and \textsf{P} run protocols to verify a claimed matching as well as its optimality. The correctness and resource bounds for verifying the matching follow from Section~\ref{sec:mcmb}. We now turn to verifying the optimality of this matching. 
  The verifier must establish the following facts:
  \begin{inparaenum}[(i)]
  \item \textsf{P} provides a valid laminar family of odd sets.
  \item The lower and upper bounds are equal. 
  \item All dual constraints are satisfied.
  \end{inparaenum}

Since the verifier fingerprints the vertices in each claw and then asks \textsf{P} to replay all vertices that participate in the laminar structure, it can verify that no vertex is repeated and therefore that the family is indeed laminar. Each $\partial U$ in a claw can be written as the difference of two odd sets, except the deepest one (for which $\partial U = U$). Therefore, the cardinality of each $\partial U$ must be even, except for the deepest one. \textsf{V} verifies this claim, establishing that the laminar family comprises of odd sets. 

Consider the term $\sum_U z_U\lfloor |U|/2 \rfloor$ in the dual cost. Since each $U$ is odd, this can be rewritten as $(1/2)( \sum_u z_u |U| - \sum_U z_U)$. Consider the odd sets $U_0 \supset U_1 \supset \ldots \supset U_l$ in a single claw. We have $r_{U_j} = \sum_{i \le j} z_{U_i}$, and therefore $\sum_j r_{U_j} = \sum_j \sum_{i\le j} z_{U_i}$. Reordering, this is equal to $\sum_{i \le j} \sum_j z_{U_i} = \sum_i z_{U_i} |U_i|$. Also, $r_{U_l} = \sum_i z_{U_i}$. Summing over all claws, $\Sigma_r = \sum_U z_U |U|$ and $\Sigma_{\max} = \sum_U z_U$. Therefore, $\sum_i y_i + \Sigma_r - \Sigma_{\max}$ equals the cost of the dual solution provided by \textsf{P}. 

Finally we turn to validating the dual constraints. Consider an edge $e = (i,j)$ whose dual constraints are satisfied: i.e. \textsf{P} provides $y_i, y_j$ and $z_U$ such that $y_i + y_j + r_{ij} \ge w_e$. Firstly, consider the case when $r_{ij} > 0$. In this case, the edge belongs to some claw $LI$. Let its lower and upper endpoints vertex levels be $s, t$, corresponding to odd sets $U_S, U_t$. Consider now the entry of $\mathbf{a}$ indexed by $(e, y_i, y_j, LI, s, t, r_{ij})$.  This entry is updated when $e$ is initially encountered and ends up with a net count of $1$ at the end of input processing. It is incremented twice when \textsf{P} sends the $(i, y_i)$ and $(j, y_j)$. When \textsf{P} sends $U_s$ this entry is incremented because $r_{ij} = r_{U_s} = \min (r_{U_S}, r_{U_t})$ and when \textsf{P} sends $U_t$ this entry is incremented because $U_t$ has level $t$, returning a final count of $5$. If $r_{ij} = 0$ (for example when the edge crosses a claw), then the entry indexed by $(e, w_e, y_i, y_j, 0,0,0,0)$ is incremented when $e$ is read. It is not updated when \textsf{P} sends $(i, y_i)$ or $(j, y_j)$. When \textsf{P} sends the laminar family, \textsf{V} increments this entry by $2$ twice (one for each of $i$ and $j$) because we know that $y_i + y_j \ge w_e$.  In this case, the entry indexed by $(i,j,w_e,y_i,y_j,0,0,0,0)$ will be exactly $5$. Thus, for each satisfied edge there is exactly one entry of $\mathbf{a}$ that has a count of $5$.

Conversely, suppose $e$ is not satisfied by the dual constraints, for which a necessary condition is that $y_i + y_j < w_e$. Firstly, note that any entry indexed by $(i,j,w_e, *, *, 0,0,0,0)$ will receive only two increments: one from reading the edge, and another from one of $y_i$ and $y_j$, but not both. Secondly, consider any entry with an index of the form $(i,j,w_e, *, *, LI, *, *, *)$ for $LI > 0$. Each such entry gets a single increment from reading $e$ and two increments when \textsf{P} sends $(i, y_i)$ and $(j, y_j)$. However, it will not receive an increment from the second of the two updates in Step 3(e), because $y_i + y_j + r_{ij} < w_e$ and so its final count will be at most $4$. The complexity of the protocol follows from the complexity for \finv, \subgraph and the matching verification described in Section \ref{sec:mcmb}. 
\end{proof}

\section{Streaming Interactive Proofs for Approximate MST}
\label{sec:mst}
For verifying the approximate weight of MST, we follow the reduction to the problem of counting the number of connected components in graphs, which was initially introduced in \cite{chazelle2005approximating} and later was generalized to streaming setting \cite{ahn2012analyzing}. Here is the main results which we use here:
\begin{lemma}[\cite{ahn2012analyzing}]
Let $T$ be a minimum spanning tree on graph $G$ with edge weights bounded by $W=$poly$(n)$ and $G_i$ be the subgraph of $G$ consisting of all edges whose weights is at most $w_i = {(1+\epsilon)}^i$ and let $cc(H)$ denote the number of connected components of graph $H$. Set $r= \lfloor \log_{1+\epsilon} W\rfloor$. Then,
\begin{align*}
w(T) \leq n-{(1+\epsilon)}^r + \sum_{i=0}^r \lambda_i cc(G_i) \leq (1+\epsilon) w(T)
\end{align*}
where $\lambda_i = {(1+\epsilon)}^{i+1} - {(1+\epsilon)}^i$.
\end{lemma}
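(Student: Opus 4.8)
The plan is to recognize this as the Chazelle--Rubinfeld--Trevisan connected-components estimator applied with geometric buckets, and to reduce it to an \emph{exact} identity for a rounded graph. The core step is to establish the following: for any connected graph $H$ on $n$ vertices whose edge weights take values in a finite set $v_0 < v_1 < \cdots < v_m$ of positive reals, writing $H_{\le t}$ for the subgraph of $H$ on edges of weight at most $t$,
\[
w(\mathrm{MST}(H)) \;=\; (n-1)\,v_0 \;+\; \sum_{j=1}^{m}(v_j - v_{j-1})\bigl(cc(H_{\le v_{j-1}}) - 1\bigr).
\]
I would prove this via Kruskal's algorithm: after it has scanned all edges of weight at most $t$, the forest built so far has exactly one tree per connected component of $H_{\le t}$, hence $n - cc(H_{\le t})$ edges, so exactly $cc(H_{\le t}) - 1$ edges of $\mathrm{MST}(H)$ have weight strictly greater than $t$. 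Expanding each MST edge weight as $w_e = v_0 + \sum_{j \ge 1 : v_j \le w_e}(v_j - v_{j-1})$, summing over the $n-1$ tree edges, exchanging the order of summation, and applying the count with $t = v_{j-1}$ (so that ``$v_j \le w_e$'' becomes ``$w_e > v_{j-1}$'') yields the identity.

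Next I would pass to the rounded graph $G'$ obtained from $G$ by replacing each $w_e$ with the next power of $1+\epsilon$, namely $w'_e = (1+\epsilon)^{\lceil \log_{1+\epsilon} w_e \rceil}$, so that $w_e \le w'_e \le (1+\epsilon)w_e$ edge by edge. A one-line spanning-tree exchange --- $\mathrm{MST}(G)$ is a spanning tree of $G'$ of $G'$-weight at most $(1+\epsilon)\,w(\mathrm{MST}(G))$, while the $G$-weight of any spanning tree never exceeds its $G'$-weight --- gives $w(\mathrm{MST}(G)) \le w(\mathrm{MST}(G')) \le (1+\epsilon)\,w(\mathrm{MST}(G))$. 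The reason for this particular rounding is the equivalence $w'_e \le (1+\epsilon)^i \iff w_e \le (1+\epsilon)^i$, so that the threshold subgraph $G'_{\le (1+\epsilon)^i}$ is exactly $G_i$, hence $cc(G'_{\le (1+\epsilon)^i}) = cc(G_i)$. Applying the identity above to $H = G'$ with $v_j = (1+\epsilon)^j$ (so $v_j - v_{j-1} = \lambda_{j-1}$ and $v_0 = 1$), substituting this identification, and telescoping the purely geometric constant terms collapses $w(\mathrm{MST}(G'))$ to $n - (1+\epsilon)^r + \sum_i \lambda_i\, cc(G_i)$; combined with the sandwich bound, this is the claim.

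I expect the only real difficulty to be boundary bookkeeping rather than anything conceptual: lining up the ``$\ge$ versus $>$'' threshold conventions in the Kruskal count and tracking which bucket is the last one (note that $cc(G_r)$ need not equal $1$, since $(1+\epsilon)^r$ may be strictly smaller than $W$), so that the telescoped additive constant emerges as exactly $n - (1+\epsilon)^r$ and the summation range is precisely $0,\dots,r$ rather than a neighbouring power of $1+\epsilon$. A secondary point to address is that if $G$ is disconnected the statement concerns a minimum spanning forest: the same identity holds with $n-1$ replaced by $n - cc(G)$, so I would prove the connected case and remark on the reduction.
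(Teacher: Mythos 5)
The paper offers no proof of this lemma at all --- it is quoted verbatim (almost) from Ahn, Guha and McGregor --- so there is no internal argument to compare against. Your route is the standard Chazelle--Rubinfeld--Trevisan / AGM one, and each individual step you describe is correct: the Kruskal invariant giving exactly $cc(H_{\le t})-1$ MST edges of weight exceeding $t$, the telescoping expansion of each edge weight over the value set, the round-up to powers of $1+\epsilon$ with the two-sided spanning-tree exchange, and the observation that the threshold subgraphs of the rounded graph coincide with the $G_i$.

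The one step you defer --- ``the telescoped additive constant emerges as exactly $n-(1+\epsilon)^r$ and the summation range is precisely $0,\dots,r$'' --- does not close as claimed. Running your own identity with $v_j=(1+\epsilon)^j$ for $j=0,\dots,m$, where $m=\lceil\log_{1+\epsilon}W\rceil=r+1$ is the top exponent attained by the rounded weights, gives
\[
w(\mathrm{MST}(G'))=(n-1)+\sum_{i=0}^{r}\lambda_i\bigl(cc(G_i)-1\bigr)=n-(1+\epsilon)^{r+1}+\sum_{i=0}^{r}\lambda_i\,cc(G_i),
\]
since $\sum_{i=0}^{r}\lambda_i=(1+\epsilon)^{r+1}-1$: the constant is $-(1+\epsilon)^{r+1}$, not $-(1+\epsilon)^{r}$. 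This is a defect of the statement as transcribed in the paper rather than of your argument. Indeed, take $n=2$ with a single edge of weight $1$, $\epsilon=1$, $W=4$, so $r=2$ and $cc(G_i)=1$ for all $i$: the displayed expression evaluates to $2-4+(1+2+4)=5$, which violates the claimed upper bound $(1+\epsilon)w(T)=2$, whereas the $(1+\epsilon)^{r+1}$ version gives $2-8+7=1=w(T)$. So your derivation is sound and proves the corrected inequality (which is what the cited source actually asserts); the only error in your write-up is the assertion that the bookkeeping lands on the exponent printed here.
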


Based on this result, we can design a SIP for verifying the approximate weight
of minimum spanning tree using a verification protocol \ref{thm:numconn} for number of connected components in a graph.
\begin{theorem}\label{thm:numconn}
Given a weighted graph with $n$ vertices, there exists a SIP protocol for verifying the number of connected components $G_i$ with $(\log n)$ rounds of communication, and $(\log^2 n, n \log n)$ cost.
\end{theorem}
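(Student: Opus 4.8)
Write $G$ for the graph in question (the $G_i$ of the statement) and let $c^*$ be the count claimed by \textsf{P}. The plan is to have \textsf{P} pin $cc(G)$ between two bounds: a spanning forest witnessing $cc(G)\le c^*$ and a colouring witnessing $cc(G)\ge c^*$. Concretely \textsf{P} sends (a) a parent pointer $p(v)$ for every vertex it declares a non-root, together with a depth label $d(v)\in\{0,\dots,n-1\}$ for \emph{every} vertex (roots being exactly those with $d(v)=0$), and (b) a colouring $\chi\colon V\to[c^*]$, given as records $(w,\chi(w))$ for all $w$. Writing $F=\{(v,p(v))\}$, \textsf{V} must verify: (i) $F\subseteq E$; (ii) $F$ is acyclic and every vertex reaches a depth-$0$ root, so $F$ is a spanning forest with exactly $c^*$ trees; (iii) $\chi$ is surjective; (iv) no edge of $E$ is bichromatic under $\chi$. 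Given (i)+(ii), $cc(G)\le cc(F)=c^*$; given (iii)+(iv), the $c^*$ colour classes partition $V$ into nonempty unions of components, so $cc(G)\ge c^*$. All checks run in parallel over one pass of the stream.

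Checks (i), (iii), (iv) are easy. Check (i) is the \subgraph protocol (Lemma~\ref{lem:subgraph}) on ground set $\binom{V}{2}$ with $S=F$ and $E$ the edge stream, at cost $(\log^2 n,(n+\log n)\log n)$. Check (iv) flattens the ``uncovered edge'' trick of Section~\ref{sec:mcmb}: index $\mathbf{b}$ by $\{((i,j),c):i<j,\ c\in[c^*]\}$, increment $\mathbf{b}[((i,j),*)]$ by $\Delta$ on each stream edge $(i,j,\Delta)$, and increment $\mathbf{b}[((i,*),\chi(i))]$ and $\mathbf{b}[((*,i),\chi(i))]$ on each record $(i,\chi(i))$; then $\mathbf{b}[((i,j),c)]=3$ exactly when $(i,j)\in E$ and $\chi(i)=\chi(j)=c$, so (iv) holds iff $F^{-1}_3(\mathbf{b})=|E|$, which by Lemma~\ref{lemma:finv} ($M=3$, $u=O(n^3)$) costs $(\log^2 n,\log^2 n)$. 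Check (iii) is handled by having \textsf{P} also exhibit one vertex per colour and verifying all $c^*$ of these are genuine representatives via an analogous flattened $F^{-1}_2$ query, again $(\log^2 n,\log^2 n)$.

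The real obstacle is check (ii): \textsf{V} cannot ``follow'' a parent pointer inside a stream, since a vertex may have many children and so $d(p(v))$ is not recoverable by a fingerprint lookup. The remedy is the flattening/tensorization device used for weighted matching in Section~\ref{sec:maxWBipartite}: push the depths into the index. Index $\mathbf{a}$ by $\mathcal{U}=\{(v,u,\delta):v,u\in V,\ \delta\in\{1,\dots,n-1\}\}$, meant to encode ``$v$ is a non-root with parent $u$, $d(v)=\delta$, $d(u)=\delta-1$''. For each pair $(v,p(v))$ increment $\mathbf{a}[(v,p(v),*)]$; for each depth record $(w,d(w))$ increment $\mathbf{a}[(w,*,d(w))]$ and $\mathbf{a}[(*,w,d(w)+1)]$. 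Then the entry $(v,u,\delta)$ reaches $3$ iff $p(v)=u$, $d(v)=\delta$ and $d(u)=\delta-1$; since $p(v),d(v)$ are functions of $v$, the unique entry that can reach $3$ for a given non-root $v$ is $(v,p(v),d(v))$, and it does so precisely when $d(p(v))=d(v)-1$. Hence $F^{-1}_3(\mathbf{a})$ equals the number of non-roots iff $d(p(v))=d(v)-1$ for every non-root; together with $d(r)=0$ on the declared roots this forces the parent-pointer graph to be acyclic with every vertex reaching a root, i.e.\ exactly property (ii). So \textsf{V} runs $F^{-1}_3$ on $\mathbf{a}$, accepting iff the count is $n-c^*$; with $M=3$, $u=O(n^3)$ this costs $(\log^2 n,\log^2 n)$ in $\log n$ rounds. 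Cheap $O(\log n)$-space \mse/fingerprint checks additionally confirm the depth table lists $1,\dots,n$ once each and that its $d(\cdot)\ge1$ vertices are exactly the heads of the parent pairs (so $|F|=n-c^*$).

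Summing up: every sub-protocol uses $O(\log^2 n)$ space and $O(\log n)$ rounds, and the communication is dominated by \textsf{P} sending $O(n)$ records of $O(\log n)$ bits plus $O(\log^2 n)$ bits of \finv/\subgraph conversation, giving $(\log^2 n, n\log n)$. Completeness is perfect (take $F$ a genuine spanning forest and $\chi(v)$ the root of $v$'s tree), soundness follows by a union bound over the $O(1)$ constituent protocols each of which errs with probability $1/\mathrm{poly}(n)$, and Lemma~\ref{verifier-time} keeps the verifier's per-token update time polylogarithmic despite the wildcard updates. The only genuinely delicate point is the one flagged above --- certifying acyclicity without pointer chasing --- and flattening resolves it just as in the weighted-matching sections.
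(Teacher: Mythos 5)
Your proposal is correct and follows the same overall strategy as the paper: certify $cc(G)=c^*$ from both sides, with a spanning forest giving $cc(G)\le c^*$ and a partition of $V$ with no crossing edges giving $cc(G)\ge c^*$ (your checks (iii)--(iv) are, up to reparametrization, the paper's \maximality protocol, which counts crossing edges via an $F^{-1}_{-1}$ query on a universe of (edge, component-label) pairs rather than counting monochromatic edges against $|E|$). Where you genuinely diverge is in certifying that the claimed forest is a forest. The paper has \textsf{P} present each tree in topological order from its root and then \emph{replay} $\cup_i T_i$ sorted by in-vertex, with \disjoint-style fingerprint comparisons, so that \textsf{V} can confirm every non-root has in-degree exactly one; acyclicity there leans on the prescribed presentation order. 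You instead attach a depth label to every vertex and tensorize it into the \finv universe, so that a single $F^{-1}_3$ query certifies $d(p(v))=d(v)-1$ for all non-roots --- a strictly decreasing potential that forces acyclicity and root-reachability purely algebraically, with no replayed sorted stream and no reliance on presentation order. This is a self-contained and arguably cleaner soundness argument for the \spanningtree sub-check, at the cost of a slightly larger universe ($O(n^3)$, still $M=3$), and it reuses the paper's own flattening device from the weighted-matching protocol. Both routes land at the claimed $(\log^2 n,\, n\log n)$ cost in $\log n$ rounds, with communication dominated by the $O(n)$-record certificate.
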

\begin{corollary}
Given a weighted graph with $n$ vertices, there exists a SIP protocol for verifying MST within ($1+ \epsilon$)-approximation with $(\log n)$ rounds of communication, and $(\log^2 n, n \log^2 n / \epsilon)$ cost.
\end{corollary}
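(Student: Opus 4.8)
The plan is to reduce a $(1+\epsilon)$-approximation of $w(T)$ to the verified computation of the connected-component counts $cc(G_i)$ and then invoke Theorem~\ref{thm:numconn}. By the preceding lemma of \cite{ahn2012analyzing}, the quantity
\[
Q \;=\; n - (1+\epsilon)^r + \sum_{i=0}^{r} \lambda_i\, cc(G_i), \qquad r = \lfloor \log_{1+\epsilon} W\rfloor,\quad \lambda_i = (1+\epsilon)^{i+1}-(1+\epsilon)^i,
\]
already satisfies $w(T) \le Q \le (1+\epsilon) w(T)$, so it suffices for the verifier to learn and certify the $r+1$ integers $cc(G_0),\dots,cc(G_r)$ and then output $Q$; computing $(1+\epsilon)^r$, the $\lambda_i$ and the final weighted sum costs only $O(\log^2 n)$ space since $W = n^{O(1)}$ and all intermediate numbers are bounded by a fixed polynomial in $n$. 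Since $\log(1+\epsilon) = \Theta(\epsilon)$ for $\epsilon \in (0,1]$, we have $r+1 = O(\log n/\epsilon)$, which is where the extra $1/\epsilon$ and the extra $\log n$ in the communication will come from.

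First I would have the verifier, during its single pass over the edge updates $(e, w_e, \Delta)$, route each edge into the connectivity sketch(es) for exactly those levels $i$ with $(1+\epsilon)^i \ge w_e$; since $w_e$ and all thresholds are known this routing is immediate and is consistent with the nesting $G_0 \subseteq \cdots \subseteq G_r = G$. The prover then supplies the claimed values $c_0,\dots,c_r$ together with the per-level spanning-structure certificates used in the protocol behind Theorem~\ref{thm:numconn}, and the verifier checks each $c_i = cc(G_i)$.

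To keep the verifier space at $O(\log^2 n)$, rather than running $r+1$ independent copies of the Theorem~\ref{thm:numconn} protocol in parallel (which would blow the space up to $\Theta((\log n/\epsilon)\log^2 n)$), I would fold the level index into the universe of a single instance: the frequency vector is indexed by tuples that also carry the level $i \in \{0,\dots,r\}$, so the domain is still polynomial in $n$, $\log$ of it is $O(\log n)$, and the whole check is one $F^{-1}$/\sumcheck-style protocol with verifier space $O(\log^2 n)$ and $O(\log n)$ rounds, exactly as in Theorem~\ref{thm:numconn}. The only quantity that scales with $r$ is the communication, since the prover must transmit one spanning certificate per level, giving total communication $(r+1)\cdot O(n\log n) = O(n \log^2 n/\epsilon)$. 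Completeness is perfect and soundness follows from that of the single underlying protocol, so no union bound over levels is needed.

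The main obstacle is precisely this space-versus-parallelism tension: the reduction naturally asks for $\Theta(\log n/\epsilon)$ separate connectivity certifications, and the real content of the argument is arranging for them to share randomness and sketch structure — via the enlarged (level, edge/vertex)-indexed universe above, equivalently a single \sumcheck over the joint vector — so that the verifier's persistent state stays $O(\log^2 n)$ bits while only the prover's message length grows by the factor $r+1$. Everything else (recovering $Q$ from certified counts, the edge-to-level routing, and the final acceptance test) is routine bookkeeping within the stated bounds, and the verifier-time reduction of Lemma~\ref{verifier-time} applies as usual.
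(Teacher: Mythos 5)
Your reduction is the same one the paper uses: snap/round weights to powers of $(1+\epsilon)$, invoke the lemma of \cite{ahn2012analyzing} to express a $(1+\epsilon)$-approximation of $w(T)$ as a fixed linear combination of the $r+1 = O(\log n/\epsilon)$ counts $cc(G_i)$, and certify each count via the connected-components protocol of Theorem~\ref{thm:numconn}, paying $(r+1)\cdot O(n\log n) = O(n\log^2 n/\epsilon)$ communication. Where you genuinely depart from the paper is on the space accounting. The paper's proof simply says ``we run this many copies of the connected components protocol in parallel,'' which, taken literally, multiplies the verifier's persistent state by $r+1$ and would give $O((\log n/\epsilon)\log^2 n)$ space rather than the claimed $O(\log^2 n)$. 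You notice exactly this tension and resolve it by folding the level index $i$ into the universe of a single $F^{-1}$/\sumcheck instance over a domain that is still $\mathrm{poly}(n)$, so the stream-dependent sketch is one polynomial evaluation at one random point and the verifier's state stays $O(\log^2 n)$; the only per-level blowup is in the prover's certificate length, which is where it belongs. This is a more careful argument than the paper's, and it buys an honest justification of the stated space bound. One small point worth making explicit if you write this up: the per-level certificate checks (the fingerprinting inside \disjoint, \spanningtree, and the degree/vertex replays) must be processed level by level so that the $O(\log n)$-bit fingerprint registers are reused rather than held simultaneously for all $r+1$ levels; since the prover can present the level-$i$ certificate in full before moving to level $i+1$, this sequential reuse is immediate, and only the stream-dependent sketches need the merged-universe trick.
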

\begin{proof}
 As the verifier processes the stream, each edge weight is snapped to the closest power of $(1+\epsilon)$. Note that given an a priori bound $n^c$ on edge weights, $G$ can be partitioned into at most $\frac{\log n}{\eps} $ graphs $G_i$. We run this many copies of the connected components protocol in parallel to verify the values of $cc(G_i), \forall i$. 
\end{proof}
We now present the proof of theorem \ref{thm:numconn}. For simplicity, consider $V = (V_1 \cup \cdots \cup V_r)$ as the $r$ connected components and $T = (T_i \cup \cdots \cup T_r)$ as $r$ spanning trees on $r$ corresponding connected components, provided by prover as the certificate.
Now verifier needs to check if the certificate $T$ is valid by considering the following conditions:
\begin{enumerate}
\item \disjoint : All the spanning trees are disjoint, i.e. $T_i \cap T_j = \emptyset$ for all the pairs of trees in $T$.
\item \subgraph : Each spanning tree in $T = (T_1 \cup \cdots \cup T_r)$ is a subgraph of the input graph $G$. This may be handled by \subgraph protocol described before in Lemma \ref{lem:subgraph}.
\item \spanningtree : Each component in $T = (T_1 \cup \cdots \cup T_r)$  is in fact a \emph{spanning tree}.
\item \maximality : Each component in $T = (T_1 \cup \cdots \cup T_r)$ is in fact maximal, i.e. there is no edge between the components in original graph $G$.
\end{enumerate}

We assume that the certificate $T$ is sent by the prover in streaming manner in the following format and both players agree on this at the start of the protocol:
\begin{align*} 
T: \{|T|, r, (T_1 , \cdots , T_r) \}
\end{align*}
For the representation of spanning trees, we consider a topological ordering on each tree $T_i$, starting from root node $root_i$, and each directed edge $(v_{out}, v_{in})$ connects the parent node $v_{out}$ to the child node $v_{in}$:
\begin{align*}
T_i: \{root_i, \cup e (v_{out}, v_{in})\}
\end{align*}

Here we present the protocols for checking each of these conditions. The following \disjoint protocol will be called as a subroutine in our \spanningtree main protocol.

\paragraph*{Protocol: \disjoint}
\begin{enumerate}
\item \textsf{P} sends over $r$ components of $T_i$ in $T$ in streaming manner . 
\item \textsf{P} ``replays" all the edges $T'$ in the tuple form $(e_{i,j}, \ell)$ for $i$, $j \in [n]$ and $\ell \in [r]$ denotes the component $e_{i,j}$ is assigned to. The edges in $T'$ are presented according to a canonical total ordering on the edge set, and hence $\textsf{V}$ can easily check that $T'$ has no repeated edge; i.e. the same edge presented in two distinct components.
\item Fingerprinting can then be used to confirm that $T' = \cup_i T_i$ with high probability, and hence that each edge occurs in at most one tree.
\item A similar procedure is run to ensure that no vertex is repeated in more than one $T_i$ and that every vertex is seen at least once.
\end{enumerate}

To check if each component in the claimed certificate $T$ sent by the prover is in fact spanning tree, the verifier needs to check that each $T_i$ is \emph{cycle-free} and also \emph{connected}. For this goal we present the following protocol: 
\paragraph*{Protocol: \spanningtree}
\begin{enumerate}
\item \textsf{P} sends over the certificate $T: \{|T|, r, (T_1 , \cdots , T_r) \}$ in which each $T_i$ is of the form  $\{i, \cup e (v_{out}, v_{in})\}$ and the root $u_i$ of $T_i$ is presented first.
\item \textsf{V} runs the \disjoint protocol to ensure that in the certificate $T: \{|T|, r, (T_1 , \cdots , T_r) \}$ all  $T_i$ and $T_j$ are edge and vertex disjoint for all $i \neq j$.
\item \textsf{V} has the prover again similarly replay $\cup_i T_i$ ordered by the label of the in-vertex of each edge to ensure that each vertex except the root $u_i$ has exactly one incoming edge; i.e. that all $T_i$ are cycle free and connected.
\end{enumerate}
We now need to check that there is no edge between sets $V_i$ and $V_j$ for $i \neq j$:
\paragraph*{Protocol: \maximality}
\begin{enumerate}
\item We define an extended universe $U$ now of size $n^3$, with elements $(e_{i,j}, k )$ where $k \in [n]$ represents the label of the component. 
\item \textsf{V} initiates the $F^{-1}_{-1}$ protocol on the input stream. Upon seeing any edge $e_{i,j}$, $\textsf{V}$ increments by $1$ all tuples containing $e_{i,j}$.
\item \textsf{P} sends the label of each vertex $(v_i, j)$ where $j \in [n]$ represents the label of the component in the certificate. (Note that the verifier can ensure this input is consistent with the $T = \cup_i T_i$ sent earlier by simply fingerprinting as described before).

\item \textsf{V} considers each vertex $v_i \in V_j$  as a decrement update by $2$ on all $n$ possible tuples compatible with $v_i$  and component label $j$. This step can be assumed as continuing the process for $F^{-1}_{-1}$ mentioned in the first step.

\item $F^{-1}_{-1}$ corresponds to exactly the set of edges observed in stream and crossing between two $V_i$ and $V_j$ for $i \neq j$. To see this, we enumerate the cases explicitly:
\begin{enumerate}
\item $\{-3, -4 \}$ are the possible values for an edge $(e_{a,b}, i)$ with both endpoints contained in a single $V_i$ corresponding to whether $e_{a,b}$ was originally in the stream or not. (The edge is decremented twice by $2$ in the derived stream.)
\item $\{-1,-2 \}$ are the possible values for any edge $(e_{a,b}, i)$ and $(e_{a,b}, j)$ with one endpoint in a $V_i$ and the other in $V_j$ for $i \neq j$, corresponding to whether $e_{a,b}$ was originally in the stream or not. ($e_{a,b}$ is decremented exactly once by $2$ in each of the two copies corresponding to $i$ and $j$ respectively.)
\end{enumerate}
\item \textsf{V} runs $F^{-1}$ with $\textsf{P}$ and accepts that there are no edges between the $V_i$ if and only if $F^{-1}_{-1} = 0$.
\end{enumerate} 
\paragraph*{Complexity Analysis of the Protocol}
We know the cost of $F^{-1}_{-1}$ protocol is $(\log^2 n, \log^2 n)$ for frequency ranges bounded by a constant, whereas the costs of the remaining fingerprinting steps and sending the certificate are at most  $(\log n, n \log n)$.  Hence the cost of our protocol is dominated by $(\log^2 n, n \log n )$ in the worst case. The verifier update cost on each step is bounded as $O(n^2)$.

\paragraph*{Testing Bipartiteness}
As a corollary of Theorem \ref{thm:numconn} it is also possible to test whether a graph is bipartite. This follows by applying the connectivity verification protocol described before on the both input graph $G$ and the \emph{bipartite double cover} of $G$, say $G'$. The bipartite double cover of a graph is formed by making two copies $u_1, u_2$ of every node $u$ of $G$ and adding edges $\{u_1, v_2\}$ and $\{u_2, v_1\}$ for every edge $\{u, v\}$ of $G$. It can be easily shown that $G$ is bipartite if and only if the number of connected components in the double cover $G'$ is exactly twice the number of connected components in $G$.
\begin{corollary}\label{bipartite}
Given an input graph $G$ with $n$ vertices, there exists a SIP protocol for testing bipartiteness on $G$ with $(\log n)$ rounds of communication, and $(\log^2 n, n \log n)$ cost.
\end{corollary}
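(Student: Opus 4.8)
The plan is to reduce bipartiteness testing to two instances of the connected-components verification protocol of Theorem~\ref{thm:numconn}, exploiting the classical fact about bipartite double covers. First I would record the graph-theoretic lemma: if $H$ is a connected graph, then its bipartite double cover $H'$ is connected when $H$ is \emph{not} bipartite, and $H'$ has exactly two connected components when $H$ \emph{is} bipartite (the two components being the two color classes of the unique proper $2$-coloring, suitably lifted). Summing over the connected components of $G$ gives $cc(G') = 2\,cc(G)$ if and only if every component of $G$ is bipartite, i.e.\ if and only if $G$ is bipartite.

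Next I would describe how the verifier simulates the edge stream of $G'$ while making a single pass over the edge stream of $G$. The vertex set of $G'$ is $[n]\times\{1,2\}$, which we identify with $[2n]$; each stream update $(\{u,v\},\Delta)$ for $G$ is translated on the fly into the two derived updates $(\{u_1,v_2\},\Delta)$ and $(\{u_2,v_1\},\Delta)$ for $G'$. Because the protocol of Theorem~\ref{thm:numconn} interacts with the input only through linear sketches of (low-degree extensions of) vectors indexed by edges --- exactly the ``derived stream'' setting already used throughout the paper (cf.\ Remark 2 and Lemma~\ref{verifier-time}) --- the verifier can run this protocol on $G'$ with no additional passes over the data and only a constant-factor blowup in the per-token update work.

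The full protocol then runs two copies of the Theorem~\ref{thm:numconn} protocol in parallel: one on the true stream for $G$, producing a claimed value $k = cc(G)$, and one on the derived stream for $G'$, producing a claimed value $k' = cc(G')$. The verifier outputs ``$G$ is bipartite'' iff both sub-protocols accept and $k' = 2k$. Completeness is immediate from Theorem~\ref{thm:numconn} and the double-cover lemma; for soundness, if either claimed count is wrong the corresponding sub-protocol rejects with constant probability, and if both counts are correct then the test $k' = 2k$ decides bipartiteness exactly, so a dishonest prover is caught with constant probability (amplifiable in the standard way). For the cost, $G'$ has $2n$ vertices and twice as many edges as $G$, so each invocation of Theorem~\ref{thm:numconn} costs $(\log^2 n,\, n\log n)$; parallel composition and the final arithmetic check preserve these bounds, and the round complexity remains $\log n$.

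The main obstacle --- and it is a mild one --- is making precise that the connected-components protocol composes correctly under this derived-stream translation, i.e.\ that feeding it the implicitly defined $G'$ stream yields a \emph{sound} proof for $cc(G')$ rather than something the prover could exploit. This is handled by noting that soundness in Theorem~\ref{thm:numconn} is stated for arbitrary (derived) stream updates to its frequency vectors, so the only thing to verify is that the translation $\{u,v\}\mapsto\{\{u_1,v_2\},\{u_2,v_1\}\}$ is computed deterministically by the verifier from each token; everything else is inherited unchanged. The one piece of genuine content beyond bookkeeping is the double-cover lemma itself, whose proof is a short case analysis on a single connected component using existence and uniqueness of its $2$-coloring.
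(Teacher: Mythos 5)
Your proposal is correct and follows exactly the paper's route: run the connected-components protocol of Theorem~\ref{thm:numconn} on both $G$ and its bipartite double cover $G'$ (with the $G'$ stream derived on the fly from each edge update) and accept iff $cc(G') = 2\,cc(G)$. The extra detail you supply on the double-cover lemma and on soundness under the derived-stream translation is consistent with, and slightly more explicit than, the paper's one-line justification.
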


\paragraph*{Remark}We note here that while we could have used known parallel algorithms for connectivity and MST combined with the protocol of Goldwasser et al \cite{GKR} and the technique of Cormode, Thaler and Yi \cite{cormode2011verifying} to obtain similar results, we need an explicit and simpler protocol with an output that we can fit into the overall TSP protocol described later in next section.

\section{ Streaming Interactive Proofs for Approximate Metric TSP}
We can apply our protocols to another interesting graph streaming problem: that of computing an approximation to the min cost travelling salesman tour. The input here is a weighted complete graph of distances.

We briefly recall the Christofides heuristic: compute a MST $T$ on the graph and add to $T$ all edges of a
min-weight perfect matching on the odd-degree vertices of T. The classical Christofides result shows that the sum of the costs of this MST and induced min-weight matching is a $3/2$ approximation to the TSP cost.
In the SIP setting, we have protocols for both of these problems. The difficulty however is in the dependency:
the matching is built on the odd-degree vertices of the MST, and this would seem to require the verifier
to maintain much more states as in the streaming setting. We show that this is not the case, and in fact we
can obtain an efficient SIP for verifying a $(3/2 + \epsilon)$-approximation to the TSP. 

Assume $T$ is a $(1+\eps)$ approximate MST provided by the prover in the
verification protocol and let $T^*$ be the optimum MST on $G$. Also, let $A$ be
the optimum solution to TSP. Since graph $G$ is connected, we have $w(A) \geq
w(T^*)$ and because $(1+\epsilon) \cdot w(T^*) \geq w(T)$, thus $(1+\epsilon)
\cdot w(A) \geq w(T)$. Further, let $M$ be the min-cost-matching over the odd
degree set $O$. By a simple reasoning, we can show that $w(M) \leq
\frac{w(A)}{2}$, thus $w(M)+w(T) \leq (1+\epsilon) \cdot w(A) + \frac{w(A)}{2}$
and from the triangle inequality it follows that the algorithm can verify the
TSP cost within ($\frac{3}{2} + \epsilon$)-approximation.

We use first the protocol for verifying approximate MST described in
Section\ref{sec:mst}.
What remains is how we verify a min-cost perfect matching on the odd-degree nodes of the spanning tree. We employ the procedure described in Section \ref{sec:mwmg} for maximum weight matching along with a standard equivalence to min-cost perfect matching. In addition to validating all the LP constraints, we also have to make sure that they pertain solely to vertices in ODD. We do this as above by using the fingerprint for ODD to ensure that we only count satisfied constraints on edges in ODD. 

Here we describe the protocol for verifying approximate metric TSP in full details, which results in Theorem \ref{TSP}:

\subsection{The TSP Verification Protocol.}
\begin{enumerate}
\item \textsf{P} presents a spanning tree which is claimed to be MST and can be verified within $(1+\epsilon)$-approximation by \textsf{V} (as described in Section \ref{sec:mst}). $\textsf{V}$ maintains a fingerprint on the vertices by using the \mse algorithm and updating the frequency of each vertex seen as an endpoint of an edge in the tree. This results in a fingerprint where each vertex has multiplicity equal to its degree in the MST. 
\item \textsf{P} then lists all vertices of the spanning tree in lexicographic order annotated with their degree. \textsf{V} verifies that this fingerprint matches the one constructed in the previous step and builds a new fingerprint for the set $ODD$ of all odd-degree vertices (disregarding their degree).
\item \textsf{P} presents a claimed min-weight perfect matching on the vertex set $ODD$ 
\item \textsf{V} verifies that this list of edges is indeed a matching using the protocol from Section \ref{sec:mcmb}. In addition, it verifies that the vertices touched by these edges comprise $ODD$ by using \mse to validate the fingerprint from the previous step. 
\item To verify the lower bound on min-weight perfect matching, we first reduce to max-weight matching. Let $W = n^c$ be the a priori upper bound on the weight of each edge. Replace all weights $w$ by $W+1 - w$. Clearly now on a complete graph the max-weight matching corresponds to the min-weight perfect matching. 
\item First, \textsf{V} needs to ensure that the entire certificate $C$ is contained inside the $ODD$ set. Recall that \textsf{V} has maintained a fingerprint of $ODD$, so we may use a variant of \mse. \textsf{P} replays $C$, along with any vertices which are in $ODD$ but not in $C$ and $\textsf{V}$ checks the fingerprints match.
\item Then, \textsf{V} needs to check that all the constraints for the problem is satisfied by the certificate. This step is identical to what we described before for Maximum-Weight-Matching~\ref{sec:mwmg}(counting the ``good'' tuples), but here the satisfied constraints must be counted only on $ODD$ set. 
\item For this goal, we amend the protocol of Section~\ref{sec:mwmg} so that \textsf{P} streams the subset $V-ODD$ to the verifier and then  \textsf{V} can simply decrement the frequency of all the tuples defined on $V-ODD$ by 1. Now all tuples corresponding to edges not containing both endpoints in $ODD$ may achieve frequency at most $4$ and hence will not be counted by the $F^{-1}_5$ query. 
\item Again, the accuracy of the claimed $V-ODD$ can be checked by using \mse. Let $D$ be the claimed $V-ODD$.  \textsf{P} streams $D$ to $\textsf{V}$, who checks by \mse that the fingerprint of $D \cup ODD$ matches that of the entire vertex set. (Note that fingerprints are linear, so the fingerprint of $D \cup ODD$ is just the fingerprint of $D$ plus the fingerprint of $ODD$.)
\item Now, \textsf{V} accepts the max-weight matching certificate if and only if the number of ``good'' tuples (which determines the count of satisfied edge constraints) is $|ODD| \choose 2$ (i.e. the number of edges in complete graph induced by the $ODD$ set). As discussed earlier, these correspond to the value of $F^{-1}_5$ in our extended universe.
\end{enumerate}

 Finally, the approximate TSP cost is the sum of the min-weight perfect matching on $ODD$ and the MST cost on the graph.



\begin{theorem}\label{TSP}
Given a weighted complete graph with $n$ vertices, in which the edge weights satisfy the triangle inequality, there exists a streaming interactive protocol for verifying optimal TSP cost within ($\frac{3}{2} + \epsilon$)-approximation with $(\log n)$ rounds of communication, and $(\log^2 n, n \log^2 n /\eps)$ cost.
\end{theorem}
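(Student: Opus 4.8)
The plan is to assemble the $(3/2+\eps)$-approximation guarantee from three pieces: correctness of the approximate MST verification, correctness of the min-weight perfect matching verification restricted to the odd-degree vertex set $ODD$, and the Christofides analysis showing that the sum of these two quantities is a $3/2$-approximation when the MST verification is exact (and hence $3/2+\eps$ when we only verify a $(1+\eps)$-approximate MST). For the cost bounds, I would observe that each of the numbered steps in the protocol is an instance of a primitive whose cost we have already bounded: step 1 is the approximate MST protocol of Section~\ref{sec:mst}, which by the corollary following Theorem~\ref{thm:numconn} costs $(\log^2 n, n\log^2 n/\eps)$; steps 2, 4, 6, 9 are applications of \mse, each costing $(\log n, n\log n)$; steps 7--10 are the max-weight matching protocol of Section~\ref{sec:mwmg}, costing $(\log^2 n, (c^s+\log n)\log n)$ with $c^s = O(n)$ since the matching is on $ODD$; and all are run in parallel, so the verifier space is the max ($\log^2 n$) and the communication is the sum, dominated by $n\log^2 n/\eps$. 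The round complexity is $\log n$ since every subprotocol uses $\log n$ rounds and they run in parallel.

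The main correctness argument I would spell out is the reduction in steps 5--10. After the reduction $w \mapsto W+1-w$, a maximum-weight matching in the complete graph on $ODD$ is exactly a minimum-weight \emph{perfect} matching on $ODD$ (every perfect matching has the same cardinality $|ODD|/2$, so maximizing $\sum (W+1-w_e)$ minimizes $\sum w_e$). I would then invoke the Cunningham--Marsh characterization (Theorem~\ref{thm:cm}) and the protocol of Section~\ref{sec:mwmg} verbatim, with one modification: the $F^{-1}_5$ query must count only edge constraints for edges with both endpoints in $ODD$. The key observation is that \textsf{P} streams $D = V \setminus ODD$ and \textsf{V} decrements by $1$ every universe tuple indexed by an edge incident to a vertex of $D$; since such a tuple would otherwise reach the "good" count of $5$, it now reaches at most $4$ and is excluded from $F^{-1}_5$. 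Conversely, tuples for edges entirely inside $ODD$ are untouched by this decrement, so they still reach $5$ exactly when the corresponding dual constraint is tight. Hence $F^{-1}_5(\mathbf{a}) = \binom{|ODD|}{2}$ if and only if the dual certificate is feasible and every edge-in-$ODD$ constraint is satisfied, which by Theorem~\ref{thm:cm} certifies optimality of the claimed matching; \textsf{V} then checks \textsf{P}'s claimed matching value against the returned dual cost. The fingerprint checks (via \mse) in steps 2, 4, 6, 9 guarantee that the vertex set \textsf{P} labels as $ODD$ really is the odd-degree set of the verified tree, that the matching edges touch exactly $ODD$, and that $D \cup ODD$ is the whole vertex set, so \textsf{P} cannot cheat by misreporting $ODD$ or $D$.

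Finally I would close with the approximation arithmetic, essentially the paragraph already in the excerpt: if $A$ is an optimal TSP tour and $T^*$ an exact MST, then $w(T^*) \le w(A)$ since $G$ is connected, the verified tree $T$ satisfies $w(T) \le (1+\eps)w(T^*) \le (1+\eps)w(A)$, the odd-degree set $O$ of $T$ has even cardinality, and shortcutting $A$ onto $O$ gives a Hamiltonian cycle on $O$ that decomposes into two perfect matchings on $O$, whence the cheaper one has weight $\le w(A)/2$ and so the verified min-weight perfect matching $M$ on $ODD$ satisfies $w(M) \le w(A)/2$. Adding, the reported cost $w(T) + w(M) \le (1+\eps)w(A) + w(A)/2 = (3/2+\eps)w(A)$, and by the triangle inequality (used to shortcut the MST-plus-matching Eulerian multigraph into a Hamiltonian tour) this reported value is also at least $w(A)$, so it lies in $[w(A), (3/2+\eps)w(A)]$ as claimed. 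The main obstacle, and the step that needs the most care, is verifying that the constraint-counting modification in steps 7--10 correctly isolates the $ODD$-induced subgraph \emph{without} the verifier ever materializing that subgraph from the stream — i.e., checking that the universe-tuple bookkeeping of Section~\ref{sec:mwmg} degrades a "bad" (cross-$ODD$) tuple's count below the acceptance threshold by exactly the right amount and never accidentally suppresses a legitimate tuple; everything else is a routine composition of already-established primitives.
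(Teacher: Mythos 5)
Your proposal matches the paper's own argument essentially step for step: the same decomposition into the $(1+\eps)$-MST protocol of Section~\ref{sec:mst} plus the Cunningham--Marsh-based matching protocol of Section~\ref{sec:mwmg} restricted to $ODD$ via the $w\mapsto W+1-w$ reduction and the decrement-by-one trick that caps cross-$ODD$ tuples at frequency $4$, the same fingerprint checks tying $ODD$ and $D=V\setminus ODD$ together, and the same Christofides arithmetic for the $(3/2+\eps)$ bound with costs dominated by the $n\log^2 n/\eps$ MST component. No gaps worth flagging.
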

\section{Boolean Hidden Hypermatching and Disjointness}
\label{sec:bool-hidd-hyperm}
Boolean Hidden Matching ($BHH_n^t$)  is a two-party one-way communication problem in which Alice's input is a boolean vector $x \in {\{0, 1\}}^n$ where $n = kt$ for some integer $k$ and Bob's input is a (perfect) hypermatching $M$ on the set of coordinates $[n]$, where each edge $M_r$ contains $t$ vertices represented by indices as $\{M_{r,1}, ..., M_{r,t}\}$, and a boolean vector $w$ of length $\frac{n}{t}$. We identify the matching $M$ with its edge incidence matrix. Let $Mx$ denote the length $\frac{n}{t}$ boolean vector $(\bigoplus_{1\leq i \leq t} x_{M_{1, i}}, \cdots , \bigoplus_{1\leq i \leq t} x_{M_{\frac{n}{t}, i}})$. It is promised in advance that there are only two separate cases:

YES case: The vector $w$ satisfies $Mx \bigoplus w = 0^{\frac{n}{t}}$.

NO case: The vector $w$ satisfies $Mx \bigoplus w = 1^{\frac{n}{t}}$.

The goal for Bob is to differentiate these two cases. 

The following lower bound result for $BHH_n^t$ is obtained in \cite{verbin2011streaming}:
\begin{lemma}(\cite{verbin2011streaming})
Any randomized one-way communication protocol for solving $BHH_n^t$ when $n=kt$ for some integer $k$, with error probability at most $\frac{1}{4}$ requires $\Omega(n^{1-\frac{1}{t}})$ communication.
\end{lemma}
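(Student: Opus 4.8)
This bound is quoted from \cite{verbin2011streaming}; the plan is to reconstruct its Fourier-analytic proof, which generalises the $\Omega(\sqrt n)$ lower bound known for ordinary Boolean hidden matching (the case $t=2$). By Yao's minimax principle it suffices to fix a hard input distribution and bound the success of deterministic one-way protocols. Take $x$ uniform in $\{0,1\}^n$ and $M$ a uniformly random perfect $t$-hypermatching on $[n]$, independent of $x$; conditioned on the case being YES or NO (each with probability $1/2$), set $w=Mx$ or $w=Mx\oplus 1^{n/t}$. A $c$-bit message partitions $\{0,1\}^n$ into cells $A_m$; conditioned on message $m$, the vector $x$ is uniform in $A_m$, and Bob's best distinguishing advantage is exactly $\E_{m,M}\big[\mathrm{TV}\big(\mathrm{law}(Mx\mid m),\ \mathrm{law}(Mx\oplus 1^{n/t}\mid m)\big)\big]$, so it suffices to show this is $o(1)$ whenever $c=o(n^{1-1/t})$.

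The next step is to pass to Fourier analysis on $\{0,1\}^n$. Let $f_m$ be the density of the law of $x$ given message $m$. For $S\subseteq[n/t]$ the character $\chi_S$ applied to $Mx$ equals $\chi_{T(S,M)}(x)$, where $T(S,M)=\bigcup_{r\in S}M_r$ has size $t|S|$ (using disjointness of the hyperedges); hence the $S$-th Fourier coefficient of $\mathrm{law}(Mx\mid m)$ is $\widehat{f_m}(T(S,M))$, and negating all bits of $w$ flips exactly the odd-$|S|$ coefficients. Parseval together with Cauchy--Schwarz then yields $\mathrm{TV}^2\le\sum_{S\neq\emptyset}\widehat{f_m}(T(S,M))^2$. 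Averaging over $M$, and using that by symmetry of the random hypermatching $T(S,M)$ is uniform over the $t|S|$-subsets of $[n]$, one obtains
\begin{equation*}
\E_M[\mathrm{TV}^2]\ \le\ \sum_{k\ge 1}\binom{n/t}{k}\binom{n}{tk}^{-1}W_{tk}(f_m),\qquad W_j(f_m):=\sum_{|U|=j}\widehat{f_m}(U)^2,
\end{equation*}
with combinatorial prefactor $\big(\Theta_t(k^{\,t-1}/n^{\,t-1})\big)^k$.

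The key point — and the step I expect to be the main obstacle — is that replacing $W_{tk}(f_m)$ by the trivial bound $\E[f_m^2]=2^n/|A_m|\le 2^c$ is far too weak (it yields only a $\log n$ lower bound), so one must invoke the level-$k$ (hypercontractivity) inequality: for a set of density $\delta$, $W_{tk}(f_m)\le\big(2e\ln(1/\delta)/(tk)\big)^{tk}$ as long as $tk$ is up to about $2\ln(1/\delta)$. Substituting, the $k$-th summand collapses to $(B/k)^k$ with $B=\Theta_t\big((\ln(1/\delta))^t/n^{\,t-1}\big)$, so the whole series is $O(B)$; the large-$k$ tail, where $tk>2\ln(1/\delta)$, is handled by the trivial bound because the prefactor is already below $1/2$ there. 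Finally one averages over $m$: the rare cells with $|A_m|<2^{-2c}\cdot 2^n$ have total probability below $2^{-c}$ and contribute $o(1)$, while for every other cell $\ln(1/\delta)\le 2c\ln 2$, so $B=O_t(c^{\,t}/n^{\,t-1})=o(1)$ when $c=o(n^{1-1/t})$. Hence $\E_{m,M}[\mathrm{TV}]=o(1)$, so a protocol with error at most $1/4$ must use $\Omega(n^{1-1/t})$ bits, and Yao's principle extends this to randomized protocols. Beyond invoking the level-$k$ inequality, what remains is only the bookkeeping for non-dyadic cell sizes and checking that the high-level tail of the series is summable, both of which are routine.
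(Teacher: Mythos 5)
The paper does not prove this lemma; it is imported verbatim from \cite{verbin2011streaming}, so there is no internal proof to compare against. Your reconstruction follows the actual Verbin--Yu argument (itself the $t$-ary generalisation of Gavinsky et al.'s Boolean-hidden-matching bound) faithfully and correctly: Yao's principle, the observation that $\chi_S(Mx)=\chi_{T(S,M)}(x)$ with $T(S,M)$ exactly uniform over $tk$-subsets, the Parseval/Cauchy--Schwarz bound on TV, and the level-$k$ inequality as the crux, with the arithmetic $c^t/n^{t-1}=o(1)\iff c=o(n^{1-1/t})$ coming out right. The one point I would not call ``routine'' is the top of the series: for $k$ near its maximum $n/t$ the prefactor $\bigl(\Theta_t(k^{t-1}/n^{t-1})\bigr)^k$ is only $\Theta_t(1)^k$, not geometrically small, and multiplied by the trivial bound $1/\delta$ the tail does not obviously converge -- the standard fix (used in the source) is to run the argument with a \emph{partial} hypermatching covering only $\alpha n$ coordinates for a small constant $\alpha$, which caps $k$ at $\alpha n/t$ and restores geometric decay.
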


\begin{lemma}\label{bhh-ver}
Consider the streaming version of $BHH_n^t$ problem, in which the binary vector $x$ comes in streaming, followed by edges in $M$ along with the boolean vector $w$ for weights. There exists a streaming interactive protocol with communication and space cost $O(t \cdot \log n (\log \log n))$ for $BHH_n^t$ problem.
\end{lemma}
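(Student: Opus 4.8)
My plan is to exploit the promise to collapse the whole problem to one parity test. Let $M_1$ be the first hyperedge to appear in the stream and $w_1$ its weight bit. Since the input is promised to be either the YES instance ($Mx\oplus w=0^{n/t}$) or the NO instance ($Mx\oplus w=1^{n/t}$), the first coordinate of $Mx\oplus w$ is $0$ in the YES case and $1$ in the NO case; equivalently, $\textsf{V}$ should output YES exactly when $\bigoplus_{i\in M_1}x_i=w_1$. So it suffices to certify the single bit $b:=\bigoplus_{i\in M_1}x_i$ and compare it with $w_1$.

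\textbf{A cheaply maintainable derived vector.} Fix a field $\field=\mathbb{F}_{2^\ell}$ with $2^\ell=\Theta(\log^2 n)$ and define $\mathbf{a}\in\integers^{n}$ indexed by $[n]$. As the stream passes, $\textsf{V}$ issues derived updates to $\mathbf{a}$: on each token $(i,x_i)$ of the $x$-phase it does $\mathbf{a}[i]\mathrel{+}=2x_i$, and on each of the $t$ vertices $i$ of $M_1$ it does $\mathbf{a}[i]\mathrel{+}=1$ (all later tokens are discarded, except that $w_1$ is remembered). Then $\mathbf{a}[i]=2x_i+\ind[i\in M_1]\in\{0,1,2,3\}$, and the value of $F^{-1}_3(\mathbf{a})=\sum_i h_3(\mathbf{a}_i)$ \emph{as computed in $\field$} equals $\big(|\{i\in M_1:x_i=1\}|\bmod 2\big)\cdot 1_\field=b\cdot 1_\field$, since $1_\field+1_\field=0$ collapses the count to its parity. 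The structural point is that these updates are exactly the ``derived stream updates'' of Remark~2 following Lemma~\ref{lemma:finv}: each input token triggers $O(1)$ updates (resp.\ $O(t)$ for $M_1$), and because the low-degree extension of $\mathbf{a}$ used inside \finv is \emph{linear} in the entries of $\mathbf{a}$, and $\mathbf{a}$ is here the sum of a function of the $x$-phase and a function of $M_1$, $\textsf{V}$ can maintain the required sketch incrementally even though the stream splits into two disjoint phases. If hyperedges arrive atomically, $\textsf{V}$ never buffers $M_1$ at all; otherwise it keeps $M_1$ at a cost of $O(t\log n)$ bits of space.

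\textbf{Protocol and costs.} $\textsf{P}$ sends the claimed bit $b$, and then $\textsf{V}$ and $\textsf{P}$ run the \finv protocol of Lemma~\ref{lemma:finv}, instantiated over $\field=\mathbb{F}_{2^\ell}$ rather than a $\mathrm{poly}(n)$-size field, to certify $F^{-1}_3(\mathbf{a})=b$; $\textsf{V}$ outputs YES iff this check passes and $b=w_1$, and NO iff it passes and $b\neq w_1$. Completeness is perfect. For soundness, \finv reduces via \sumcheck (Lemma~\ref{lem::sum-check-protocol}) over $\log n$ rounds of a degree-$3$ polynomial to a Schwartz--Zippel-type identity test, so the error is $O(\log n/|\field|)=O(1/\log n)=o(1)$, well below $1/3$ (and amplifiable by repetition). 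Since $\log|\field|=O(\log\log n)$, Lemma~\ref{lemma:finv} gives $\log n$ rounds with space and communication $O(\log n\log\log n)$ for the \finv part; adding $O(t\log n)$ for possibly holding $M_1$ and $O(\log n)$ bookkeeping, both resources are $O(t\log n\log\log n)$ (the communication is in fact only $O(\log n\log\log n)$).

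\textbf{Main obstacle.} The only genuinely non-routine point is the reduction itself: once one observes that the promise lets $\textsf{V}$ read the answer off a single hyperedge, the protocol is essentially a one-line corollary of machinery already developed in the paper. The apparent difficulty that the stream is split ($x$ first, then $M$ and $w$)---so that $\textsf{V}$ cannot form an extension of ``$x$ routed along $M$''---dissolves because the derived vector decomposes additively over the two phases while the \finv sketch is linear; and the extra $\log\log n$ factor is obtained for free by running \finv over a $\mathrm{poly}\log n$-size field of characteristic $2$, which is safe precisely because the certified quantity (one bit) and the maximum frequency (three) are bounded by constants.
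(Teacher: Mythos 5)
Your first step --- using the promise to reduce the whole problem to certifying the single bit $b=\bigoplus_{i\in M_1}x_i$ for one hyperedge $M_1$ --- is exactly the paper's key observation. But the second half of your argument has a genuine gap. The paper simply runs $t$ instances of the \textsc{Index} SIP of Chakrabarti et al.\ at the positions $M_{1,1},\dots,M_{1,t}$, each costing $O(\log n\log\log n)$; you instead try to certify $b$ with a single \finv query over $\field=\mathbb{F}_{2^\ell}$, and this is where the construction breaks. The \finv machinery needs $h_3$ to be a polynomial \emph{over $\field$} satisfying $h_3(\overline{k})=[k=3]$ for each attainable frequency $k\in\{0,1,2,3\}$, where $\overline{k}$ denotes the image of the integer $k$ in $\field$. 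In characteristic $2$ you have $\overline{3}=\overline{1}$ and $\overline{2}=\overline{0}$, so no such $h_3$ exists (equivalently, the Lagrange denominators in the interpolation over the nodes $0,1,2,3$ vanish), and the low-degree extension of $\mathbf{a}$ only retains $\mathbf{a}_i\bmod 2$ --- which makes the indices with $x_i=1,\ i\in M_1$ (frequency $3$) indistinguishable from those with $x_i=0,\ i\in M_1$ (frequency $1$). Your protocol would therefore end up certifying something like $|M_1|\bmod 2$, independent of $x$.

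The tension is not easily patched within your framework: the parity collapse you rely on requires characteristic $2$, while a well-defined $h_3$ requires characteristic $>3$; and over odd characteristic $p=\Theta(\mathrm{poly}\log n)$ the verified quantity is only the count $|\{i\in M_1:x_i=1\}|\bmod p$, so a cheating prover can shift the claimed count by $p$ (odd) and flip the parity. Falling back to the standard \finv over a large prime field is sound but costs $(\log^2 n,\log^2 n)$, which exceeds the claimed $O(t\log n\log\log n)$ when $t=o(\log n/\log\log n)$. A correct version of your ``one sum-check instead of $t$ point queries'' idea does exist --- keep \emph{two} $\{0,1\}$-valued vectors (one for $x$, one for the indicator of $M_1$), and run \sumcheck over $\mathbb{F}_{2^\ell}$ on the \emph{product} of their two multilinear extensions, so that the verified sum is literally the parity $b$ and no frequency-indicator polynomial is ever interpolated --- but that is a different protocol from the one you wrote, and from the paper's.
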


\begin{proof}
Considering the promise that we have in YES and NO case of $BHH_n^t$ communication problem, it is enough to query the weights of vertices on only one of the hyperedges on the matching and compare it to the corresponding weight in vector $w$. This way the $BHH_n^t$ problem can be reduced to $t$ instances of INDEX problem. Assume the vector $x$ as the input stream and take one of the followed hyperedges, say $M_r = \{M_{r,1}, ..., M_{r,t}\}$, as the $t$ query index. In this scenario the verifier just need to apply the verification protocol INDEX in $t$ locations $\{M_{r,1}, ..., M_{r,t}\}$ on $x$ and check if $\bigoplus_{1\leq i \leq t} x_{M_{r, i}} \bigoplus w_r = 0$ or $\bigoplus_{1\leq i \leq t} x_{M_{r, i}}  \bigoplus w_r = 1$. According to \cite{chakrabartiverifiable}, the verification (communication and space) cost for INDEX problem is $O(\log n (\log \log n))$ and this results in $O(t \cdot \log n (\log \log n))$ cost for $BHH_n^t$.
\end{proof}

 We now show a similar result for Disjointness($DISJ_n$).  $DISJ_n$ is a two-party one-way communication problem in which Alice and Bob each have a boolean vector $x$ and $y \in {\{0, 1\}}^n$ respectively, and they wish to determine if there is some index $i$ such that $a_i = b_i = 1$. Razoborov~\cite{disjhardness} shows an $\Omega(n)$ lower bound on the communication complexity of this problem for one-way protocols. We show now however that $DISJ_n$ is easy in the SIP model.

\begin{lemma}\label{dis-ver}
Consider the streaming version of $DISJ_n$ problem, in which the binary vector $x$ comes in streaming, followed by binary vector $y$. There exists a streaming interactive protocol with communication and space cost $O(\log^2 n)$ for $DISJ_n$.
\end{lemma}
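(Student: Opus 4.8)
The plan is to reduce $DISJ_n$ to a single application of the inverse-frequency machinery (Lemma~\ref{lemma:finv}), exactly in the spirit of the \subgraph protocol in Lemma~\ref{lem:subgraph}. The key observation is that Alice and Bob are not disjoint precisely when some coordinate $i$ receives a contribution from \emph{both} the $x$-stream and the $y$-stream. So I would have \textsf{V} maintain a single frequency vector $\mathbf{a} \in \integers^n$: when the bit $x_i = 1$ arrives, increment $\mathbf{a}[i]$ by $1$, and when the bit $y_i = 1$ arrives, increment $\mathbf{a}[i]$ by $1$ again. Then $\mathbf{a}_i \in \{0,1,2\}$ for every $i$, and $\mathbf{a}_i = 2$ if and only if $i$ is a witness to non-disjointness. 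Hence $x$ and $y$ are disjoint if and only if $F^{-1}_2(\mathbf{a}) = 0$. The verifier simply runs the \finv protocol of Lemma~\ref{lemma:finv} with $k = 2$ on this derived stream and accepts ``disjoint'' iff the verified value of $F^{-1}_2(\mathbf{a})$ is $0$.

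For the cost analysis: here the universe size is $u = n$, and the maximum coordinate value is $M = 2$, a constant. Plugging into Lemma~\ref{lemma:finv} gives cost $(\log^2 u, M \log^2 u) = (\log^2 n, \log^2 n)$, and $\log n$ rounds, which matches the claimed $O(\log^2 n)$ bound on both communication and space. (If one wants a constant-round version, the same trick used elsewhere in the paper applies, but the statement only asks for the $O(\log^2 n)$ bound.) Completeness and soundness follow directly from the corresponding guarantees of the \finv protocol: in the disjoint case the honest prover convinces \textsf{V} that $F^{-1}_2 = 0$ with probability $1$, and in the intersecting case no prover can make \textsf{V} accept a false value of $F^{-1}_2$ except with the small error probability inherited from Lemma~\ref{lemma:finv}.

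I do not expect a serious obstacle here — the lemma is essentially a corollary of Lemma~\ref{lemma:finv} via the ``derived stream'' reduction, just as \subgraph was. The only point that needs a sentence of care is the bookkeeping that makes the two input phases (the $x$-vector, then the $y$-vector) both feed increments into the \emph{same} vector $\mathbf{a}$, so that a shared $1$-coordinate is exactly the event $\mathbf{a}_i = 2$; this is immediate since the stream updates are additive and linear sketches compose. A secondary remark worth including is that, as with the other protocols, the verifier update time is trivially $O(\log n)$ per token here because each stream element touches a single coordinate of $\mathbf{a}$, so Lemma~\ref{verifier-time} is not even needed.

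\begin{proof}
The verifier maintains a single frequency vector $\mathbf{a} \in \integers^n$. As the bits of $x$ arrive, for each $i$ with $x_i = 1$ it issues the update $\mathbf{a}[i] \leftarrow \mathbf{a}[i] + 1$; as the bits of $y$ arrive, for each $i$ with $y_i = 1$ it again issues $\mathbf{a}[i] \leftarrow \mathbf{a}[i] + 1$. At the end of the stream $\mathbf{a}_i = x_i + y_i \in \{0,1,2\}$, so $\mathbf{a}_i = 2$ exactly when $x_i = y_i = 1$. Thus $x$ and $y$ intersect if and only if $F^{-1}_2(\mathbf{a}) \ge 1$, and are disjoint if and only if $F^{-1}_2(\mathbf{a}) = 0$. \textsf{V} runs the \finv protocol of Lemma~\ref{lemma:finv} with $k = 2$ on the derived stream of updates to $\mathbf{a}$ and accepts ``disjoint'' iff the verified value equals $0$. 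Here $u = n$ and the maximum coordinate is $M = 2$, so Lemma~\ref{lemma:finv} gives cost $(\log^2 n, \log^2 n)$ in $\log n$ rounds, and completeness and soundness follow from those of \finv. Since each stream element updates only one coordinate of $\mathbf{a}$, the verifier update time is $O(\log n)$ per token.
\end{proof}
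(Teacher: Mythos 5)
Your proposal is correct and is essentially identical to the paper's own proof: both maintain a single frequency vector over $[n]$, add $x_i$ and then $y_i$ to coordinate $i$, and observe that disjointness holds iff $F^{-1}_2(\mathbf{a})=0$, which is verified via the \finv protocol of Lemma~\ref{lemma:finv} with $u=n$ and $M=2$, giving cost $(\log^2 n, \log^2 n)$. Your additional remarks on update time and round reduction are accurate but not needed for the claim.
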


\begin{proof}
The verifier maintains a universe $\mathcal{U}$ corresponding to $[n]$. When the verifier sees the $i^{th}$ bit of $x$, it increments the frequency of universe element $i$ by $x_i$. Now when the verifier streams $y_i$, the verifier again increments the frequency of element $i$ by $y_i$. Clearly, $x$ and $y$ correspond to disjoint sets if and only if $F^{-1}_{2} = 0$. We can then simply run the \finv protocol, and the bound follows by Lemma~\ref{lemma:finv}
\end{proof}

Lemma \ref{bhh-ver} and \ref{dis-ver} shows that while $BHH_n^t$ and $DISJ_n$ are lower bound barriers to computations in the streaming model, however they are easily tractable in the streaming verification setting. This gives a first suggestion that for problems such as MAX-CUT and MAX-MATCHING where most of the known lower bounds go through $BHH_n^t$ or $DISJ_n$, streaming verification protocols may prove more effective, and was the initial motivation for our study.

\section{Revisit the sum-check protocol with constant rounds}
\label{sec:constant-round}
As mentioned before, we have a $(\log u)$-rounds $(\log u)$-cost verification protocol for any frequency-based functions by applying the sum-check as the core of the protocol. In this section, we study on the possibility of reducing the round-complexity of these protocols to constant-rounds.

For this goal, we revisit the sum-check protocol described in \cite{cormode2011verifying} and briefly explain the details of the protocol and the complexity analysis.

The sum-check which we present here is for verifying $F_1(\mathbf{a}) = \sum_{i \in [u]} f_{\mathbf{a}} (i) = \sum_{x_1, \cdots, x_d \in {[\ell]}^{d}} f_{\mathbf{a}}(x_1, x_2, \cdots, x_d)$, in which $u = {[\ell]}^d$. We can simply extend this protocol to any frequency-based function defined as $F(\mathbf{a}) = \sum_{i \in [u]} h(\mathbf{a}_i) = \sum_{i \in [u]} h \circ f(i)$. 

We briefly describe the construction of this extension polynomial. Start from $f_{\mathbf{a}}$ and rearrange the frequency vector $\mathbf{a}$ into a $d$-dimensional array in which $u = {\ell}^d$ for a choosen parameter $\ell$. This way we can write $i \in [u]$ as a vector $({(i)}_1^\ell, ..., {(i)}_d^\ell) \in {[\ell]}^d$. Now we pick a large prime number for field size $|\field| > u$ and define the \emph{low-degree extension} (LDE) of $\mathbf{a}$ as $f_a(\mathbf{x}) = \sum_{\mathbf{v} \in {[\ell]}^d} a_{\mathbf{v}} \chi_{\mathbf{v}}(\mathbf{x})$, in which $\chi_{\mathbf{v}}(\mathbf{x}) = \prod_{j=1}^d \chi_{v_j}(x_j)$ and $\chi_k(x_j)$ has this property that it is equal to $1$ if $x_j = k$ and $0$ otherwise. This indicator function can be defined by Lagrange basis polynomial as follows:
\begin{equation}
 \frac{(x_j-0) ... (x_j - (k-1))(x_j - (k+1)) ... (x_j - (\ell -1))}{(k-0) ... (k-(k-1))(k-(k+1)) ... (k-(\ell-1))} \label{eq:chi}
\end{equation}

Observe that for any fixed value $\mathbf{r} \in {[\field]}^d$, $f_a(\mathbf{r})$ is a linear function of $\mathbf{a}$ and can be evaluated by a streaming verifier as the updates arrive. This is the key to the implementation of the sum-check protocol. 

At the start of protocol, before observing the stream, \textsf{V} picks a random point, presented as $\mathbf{r}= (r_1, \cdots, r_d) \in {[\field]}^d$ in the corresponding field. Then computes $f_{\mathbf{a}}(\mathbf{r})$ incrementally as reads the stream updates on $\mathbf{a}$. After observing the stream, the verification protocol proceeds in $d$ rounds as follows:

In the first round, \textsf {P} sends a polynomial $g_1(x_1)$, claimed as :
\begin{align*}
g_1(X_1) = \sum_{x_2, \cdots, x_d \in {[\ell]}^{d-1}} f_{\mathbf{a}}(X_1, x_2, \cdots, x_d)
\end{align*}

Note that in this stage, if polynomial $g_1$ is the same as what is claimed here by \textsf{P}, then $F_1(\mathbf{a}) = \sum_{x_1 \in [\ell]} g_1(x_1)$.

Followng this process, in round $j > 1$, \textsf{V} sends $r_{j-1}$ to \textsf{P}. Then \textsf{P} sends a polynomial $g_j(x_j)$, claiming that:
\begin{align*}
g_j(X_j) = \sum_{x_{j+1}, \cdots, x_d \in {[\ell]}^{d-j}} f_{\mathbf{a}} (r_1, \cdots, r_{j-1}, X_j, x_{j+1}, \cdots, x_d)
\end{align*}

In each round, \textsf{V} does consistency checks by comparing the two most recent polynomials as follows:
\begin{align*}
g_{j-1}(r_{j-1}) = \sum_{x_j \in {[\ell]}} g_j(x_j)
\end{align*}
 Finally, in the last round, \textsf{P} sends $g_d$ which is claimed to be:
 \begin{align*}
 g_d(X_d) = f_{\mathbf{a}}(r_1, \cdots, r_{d-1}, X_d)
 \end{align*}
 
 Now, \textsf{V} can check if $g_d(r_d) = f_{\mathbf{a}}(\mathbf{r})$. If this test (along with all the previous checks) passes, then \textsc{V} accepts and convinced that $F_1(\mathbf{a}) = \sum_{x_1 \in [\ell]} g_1(x_1)$. 
 
 \paragraph{Complexity Analysis.} The protocol consists of $d$ rounds, and in each of them a polynomial $g_j$ is sent by \textsf{P}, which can be communicated using $O(\ell)$ words. This results in a total communication cost of $O(d \ell)$. $V$ needs to maintain $\mathbf{r}$, $f_{\mathbf{a}}(\mathbf{r})$ which each requires $(d+1)$ words of space, as well as computing and maintaining the values for a constant number of polynomials in each round of sum-check. As described before, this is required for comparing the two most recent polynomials by checking
 \begin{align*}
g_{j-1}(r_{j-1}) = \sum_{x_j \in {[\ell]}} g_j(x_j)
\end{align*}

Each of the $g_j$ communicated in round $j$ is a univariate polynomial with degree $(\ell - 1)$ and can be described in $(\ell - 1)$ words. Let's represent each polynomial $g_j$ as follows:

\begin{align*}
g_j(x_j) = \sum_{i \in [\ell - 1]} c_{ij} x_j^{i} 
\end{align*}
 
In each round $j$ the verifier requires to do the consistency checks over the recent polynomials as follows:
 \begin{align*}
g_j(r_j)  =  \sum_{x_j \in [\ell]} \sum_{i \in [\ell - 1]} c_{ij} x_j^{i} 
 \end{align*}
 
By reversing the ordering over the sum operation, we can rewrite this check as:
\begin{align*}
g_{j-1}(r_{j-1})  =   \sum_{i \in [\ell - 1]} \sum_{x_j \in [\ell]} c_{ij} x_j^{i}  = \sum_{i \in [\ell - 1]} c_{ij} \sum_{x_j \in [\ell]}  x_j^{i} 
\end{align*}

Let $y_i = \sum_{x_j \in [\ell]}  x_j^{i}$. Then, this will be equivalent to:
\begin{align*}
g_{j-1}(r_{j-1}) = \sum_{i \in [\ell - 1]} c_{ij} y_i
\end{align*}

Both sides in this test can be computed and maintained in $O(1)$ words space as \textsf{V} reads the polynomials $g_{j}$ presented by \textsf{P} in streaming manner. Thus, the total space required by \textsf{V} is $O(d)$ words.

By selecting $\ell$ as a constant (say 2), we obtain both space and communication cost $O(\log u)$ words for sum-check protocol which runs in $\log u$ rounds and the probablity of error is $\frac{\ell d}{|\field|}$.
 
 Now to obtain constant-rounds protocol, we can set $\ell = O(u^{\frac{1}{\gamma}})$ for some integer constant $\gamma > 1$, and considering $u = {[\ell]}^d$, we get $d = \gamma = O(1)$ (note that $d$ controls the the number of rounds), result in a protocol with constant rounds and total communication $O(u^{\frac{1}{\gamma}})$ words, while maintaining the low space cost $\gamma = O(1)$ words for \textsf{V}. The failure probability goes to $O(\frac{u^{\frac{1}{\gamma}}}{|\field|})$, which by choosing $|\field|$ larger than $u^b$ it can be made less than $\frac{1}{u^b}$ for any constant $b$ without changing the asymptotic bounds.

\paragraph{Constant round for frequency-based functions}
Here for verifying any statistic $F(\mathbf{a}) = \sum_{i \in [u]} h(\mathbf{a}_i) = \sum_{i \in [u]} h \circ f(i)$ on frequency vector $\mathbf{a}$, we use the similiar ideas to basic sum-check protocol which we described for verifying $F_1$, but the polynomials communicated by prover will be based on functions $h \circ f_{\mathbf{a}}$:

In the first round, \textsf {P} sends a polynomial $g'_1(X_1)$, claimed as:
\begin{align*}
g'_1(X_1) = \sum_{x_2, \cdots, x_d \in {[\ell]}^{d-1}} h \circ f_{\mathbf{a}}(X_1, x_2, \cdots, x_d)
\end{align*}

If polynomial $g'_1$ is the same as what is claimed here by \textsf{P}, then $F(\mathbf{a}) = \sum_{x_1 \in [\ell]} g'_1(x_1)$.

In each round $j > 1$, \textsf{V} sends $r_{j-1}$ to \textsf{P}. Then \textsf{P} sends a polynomial $g'_j(X_j)$, claimed as:
\begin{align*}
g'_j(X_j) = \sum_{x_{j+1}, \cdots, x_d \in {[\ell]}^{d-j}} h \circ f_{\mathbf{a}} (r_1, \cdots, r_{j-1}, X_j, x_{j+1}, \cdots, x_d)
\end{align*}

Again, consistency checks in each round is done by \textsf{V} by comparing the two most recent polynomials:
\begin{align*}
g'_{j-1}(r_{j-1}) = \sum_{x_j \in {[\ell]}} g'_j(x_j)
\end{align*}
And finally the verification process will be completed in the last round by sending polynomial $g'_d(X_d)$ by \textsf {P}, claimed as:
 \begin{align*}
 g'_d(X_d) = h \circ f_{\mathbf{a}}(r_1, \cdots, r_{d-1}, X_d)
 \end{align*}
 
followed by checking if $g'_d(r_d) = h \circ f_{\mathbf{a}}(\mathbf{r})$ by \textsf {V}.

\paragraph{Complexity Analysis.} Protocol consists of $d$ rounds, and in each of them a polynomial $g'_j$ with degree $O(\deg(h) \cdot \ell)$ is sent by \textsf{P}, which can be communicated using $O(\deg(h) \cdot \ell)$ words. This results in a total communication cost of $O(\deg(h) \cdot d \ell)$. \textsf{V} needs to maintain $\mathbf{r}$, $h \circ f_{\mathbf{a}}(\mathbf{r})$ (each requires $O(d)$ words space) as well as computing and maintaining the value for a constant number of polynomials in streaming manner in each round of protocol (requires $O(1)$ words of space), which results in a total space of $O(d)$ words of space. By selecting $\ell$ as a constant (say 2), then we obtain communication cost $O(\deg(h) \cdot \log u)$ and space cost $O(\log u)$ words for sum-check protocol which runs in $\log u$ rounds and the probability of error is $\frac{\deg(h) \cdot \ell d}{|\field|}$.

Note that the number of variables in input function to sum-check protocol determines the number of rounds and for any frequency-based function defined as $F(\mathbf{a}) = \sum_{i \in [u]} h(\mathbf{a}_i) = \sum_{i \in [u]} h \circ f(i)$, in which $h$ is a univariate function, the number of variables will not change and will be the same as $f_{\mathbf{a}}$. This implies that by applying the same trick as described above for reducing the number of variables in $f_{\mathbf{a}}$ (by setting $\ell = O(u^{\frac{1}{\gamma}})$ for some integer constant $\gamma > 1$ and $d = \gamma = O(1)$), we can obtain a constant-round protocol for verifying any statistics $F(\mathbf{a})$ defined by the frequency vector on the input stream, with space cost $\gamma = O(1)$ words and communication cost $O(\deg(h) \cdot u^{\frac{1}{\gamma}})$ words, while keeping the probability of error as low as $O(\frac{\deg(h)}{u^b})$ for some integer $b > 1$ (by choosing $|\field| > u^b$).

\begin{lemma}[\cite{cormode2011verifying}]
\label{lem::sum-check-protocol-cr}
There is a SIP to verify that $\sum_{i \in [u]} h(\mathbf{a}_i) = K$ for some claimed $K$,
with constant-round ($\gamma$), space cost $\gamma \cdot \log n = O(\log n)$ bits and communication cost $O(\deg(h) \cdot u^{\frac{1}{\gamma}} \cdot \log n)$ bits, while keeping the probability of error as low as $O(\frac{\deg(h)}{u^b})$ for some integer $b > 1$ (by choosing $|\field| > u^b$).
\end{lemma}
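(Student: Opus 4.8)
The plan is to assemble the ingredients developed in this section into one protocol and then simply track its costs. First I would fix the arithmetization: reshape the frequency vector $\mathbf{a} \in \integers^u$ into a $d$-dimensional array of side length $\ell$ with $u = \ell^d$, pick a prime field $\field$ with $|\field| > u^b$, and form the low-degree extension $f_{\mathbf{a}}(\mathbf{x}) = \sum_{\mathbf{v} \in [\ell]^d} a_{\mathbf{v}} \chi_{\mathbf{v}}(\mathbf{x})$ from the Lagrange indicator polynomials of \eqref{eq:chi}. The key point, already noted above, is that for any fixed $\mathbf{r} \in \field^d$ the value $f_{\mathbf{a}}(\mathbf{r})$ is a linear function of $\mathbf{a}$, so \textsf{V} can choose $\mathbf{r}$ before the stream and maintain $f_{\mathbf{a}}(\mathbf{r})$ under the updates using $O(d)$ field elements of space; the same is true of $h \circ f_{\mathbf{a}}(\mathbf{r})$, which \textsf{V} derives from $f_{\mathbf{a}}(\mathbf{r})$ at the end.

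Next I would run the $d$-round sum-check on the composed statistic $h \circ f_{\mathbf{a}}$: in round $j$, \textsf{P} sends the univariate polynomial $g'_j(X_j)$ obtained by summing $h \circ f_{\mathbf{a}}$ over the last $d-j$ coordinates with the first $j-1$ coordinates fixed to $r_1,\dots,r_{j-1}$; \textsf{V} performs the consistency check $g'_{j-1}(r_{j-1}) = \sum_{x_j \in [\ell]} g'_j(x_j)$, sends a fresh random $r_j$, and in the final round verifies $g'_d(r_d) = h \circ f_{\mathbf{a}}(\mathbf{r})$. Since $f_{\mathbf{a}}$ has degree $\ell-1$ in each variable and $h$ is univariate, each $g'_j$ has degree $O(\deg(h)\cdot \ell)$, so it is described by $O(\deg(h)\cdot\ell)$ field elements; over $d$ rounds this gives $O(\deg(h)\cdot d\ell)$ communication. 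For the space bound I would use the power-sum trick presented above: writing $g'_j = \sum_i c_{ij} x^i$ and precomputing $y_i = \sum_{x\in[\ell]} x^i$, the check $g'_{j-1}(r_{j-1}) = \sum_i c_{ij} y_i$ can be evaluated while streaming the coefficients of $g'_j$, so only $O(1)$ field elements are needed per round beyond storing $\mathbf{r}$ and $f_{\mathbf{a}}(\mathbf{r})$; the total verifier space is $O(d)$ field elements.

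For correctness I would observe that completeness is perfect: an honest prover's polynomials satisfy every consistency check by construction. For soundness I would invoke the standard inductive Schwartz--Zippel argument for sum-check: if the claimed value $K$ is wrong, then at some round $j$ the prover is forced to send an incorrect $g'_j$, and the random $r_j$ exposes the discrepancy except with probability $\deg(g'_j)/|\field| = O(\deg(h)\ell/|\field|)$; a union bound over the $d$ rounds yields total error $O(\deg(h)\ell d/|\field|)$. Finally I would instantiate parameters: taking $\ell = O(u^{1/\gamma})$ forces $d = \gamma = O(1)$, giving $\gamma$ rounds, communication $O(\deg(h)\cdot u^{1/\gamma})$ field elements, and space $O(\gamma)$ field elements; choosing $|\field| > u^b$ makes the error $O(\deg(h)/u^b)$, and since $u$ is polynomial in $n$ we have $\log|\field| = O(\log u) = O(\log n)$, so each field element is $O(\log n)$ bits, converting the word bounds into the stated bit bounds. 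The main obstacle I anticipate is bookkeeping rather than conceptual depth: pinning down the degree of the composed polynomials $g'_j$ exactly (it is $\deg(h)(\ell-1)$, not $\deg(h)+\ell$) and quantifying the per-round failure probability correctly before the union bound, so that the $\deg(h)$ factor in the final communication and error bounds comes out right.
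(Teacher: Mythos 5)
Your proposal is correct and follows essentially the same route as the paper: the same reshaping of $\mathbf{a}$ into an $\ell^d$ array with low-degree extension, the same $d$-round sum-check on $h \circ f_{\mathbf{a}}$ with the power-sum trick for $O(1)$-word consistency checks, and the same instantiation $\ell = O(u^{1/\gamma})$, $d = \gamma$, $|\field| > u^b$ to obtain the stated round, space, communication, and error bounds. Your explicit Schwartz--Zippel soundness accounting and the note that each $g'_j$ has degree $\deg(h)(\ell-1)$ per variable are consistent with the paper's $O(\deg(h)\cdot\ell)$ bound.
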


\begin{corollary}[\cite{cormode2011verifying}]
\label{constant-sumcheck}
Let $h$ be a univariate polynomial defined on the frequency vector $a$ of a graph $G$ under our model.
There is a SIP for the function $F(\tau) = \sum_{i \in [u]} h(\mathbf{a}_i)$.  The total number of rounds is constant $\gamma$ and the cost of the protocol is $\left(\gamma \log n ,   u^{\frac{1}{\gamma}} \log n\right)$.
\end{corollary}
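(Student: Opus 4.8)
The plan is to obtain this as essentially an immediate specialization of Lemma~\ref{lem::sum-check-protocol-cr} to the graph setting, with the parameters tuned so that the bounds come out in bits as claimed. First I would recall that in every graph protocol of this paper the frequency vector $\mathbf{a}$ lives in a universe $\mathcal{U}$ of size $u = \mathrm{poly}(n)$ (e.g. $O(n^3)$ for triangle counting, $O(n^{3c+2})$ for weighted bipartite matching, $O(n^{4c+5})$ for general weighted matching), so that $\log u = O(\log n)$, and correspondingly a field $\field$ of size $\mathrm{poly}(u) = \mathrm{poly}(n)$ has $\log|\field| = O(\log n)$ bits per element. Hence the word-to-bit conversion is harmless: the $O(\gamma)$ words of verifier space and $O(\deg(h)\cdot u^{1/\gamma})$ words of communication from Lemma~\ref{lem::sum-check-protocol-cr} translate to $O(\gamma \log n)$ and $O(\deg(h)\cdot u^{1/\gamma}\log n)$ bits.

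Next I would use the hypothesis that $h$ is a univariate polynomial of constant degree, which holds for all instantiations used here (the indicator-type polynomials $h_k$ that interpolate $\{0,1\}$-valued behaviour over a range $[M]$ with $M = O(1)$ after flattening, or more generally any fixed-degree $h$). This lets $\deg(h) = O(1)$ be absorbed into the big-$O$, giving communication $O(u^{1/\gamma}\log n)$ bits. Setting $d = \gamma$ and $\ell = \lceil u^{1/\gamma}\rceil$ in the low-degree extension construction of Section~\ref{sec:constant-round} puts the protocol into exactly $\gamma$ rounds; if $u$ is not a perfect $\gamma$-th power I would pad $\mathcal{U}$ up to the next value of the form $\ell^\gamma$, which changes $u$ by at most a constant factor and so does not affect the asymptotics.

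For correctness I would note that completeness is perfect: an honest prover sending the true partial-sum polynomials $g'_j$ passes every consistency check $g'_{j-1}(r_{j-1}) = \sum_{x_j\in[\ell]} g'_j(x_j)$ and the final test $g'_d(r_d) = h\circ f_{\mathbf{a}}(\mathbf{r})$. For soundness, the constant-round sum-check analysis gives failure probability $O(\deg(h)\cdot \ell d / |\field|) = O(u^{1/\gamma}/|\field|)$, so I would take $|\field| > u^{b}$ for a suitable constant $b > 1$ (still $\log|\field| = O(\log n)$), which drives the error below $1/3$ (indeed below any inverse polynomial) without changing the stated space and communication bounds.

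The main obstacle I anticipate is purely bookkeeping rather than conceptual: one must be careful that the arithmetization of $h\circ f_{\mathbf{a}}$ has per-variable degree $O(\deg(h)\cdot \ell)$ — not $O(\deg(h))$ — because the low-degree extension of $\mathbf{a}$ itself has degree $\ell-1$ in each coordinate, so each message $g'_j$ genuinely fits in $O(\deg(h)\cdot\ell)$ words; and that the verifier's per-round check can be evaluated in $O(1)$ words by precomputing the power sums $\sum_{x\in[\ell]} x^i$ exactly as in the preceding discussion. None of this needs new ideas beyond what is already set up for Lemma~\ref{lem::sum-check-protocol-cr}.
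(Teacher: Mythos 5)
Your proposal is correct and follows essentially the same route as the paper: the corollary is obtained as a direct specialization of Lemma~\ref{lem::sum-check-protocol-cr} with $d=\gamma$ and $\ell = O(u^{1/\gamma})$, using $u=\mathrm{poly}(n)$ and $\log|\field|=O(\log n)$ for the word-to-bit conversion, and absorbing $\deg(h)=O(1)$ (which holds for all instantiations in the paper, since the maximum frequency is a constant after flattening) into the communication bound. Your added remarks on padding $u$ to a perfect $\gamma$-th power and on the per-variable degree $O(\deg(h)\cdot\ell)$ of the messages are consistent with, and slightly more careful than, the paper's own account.
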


\textbf{Remark} Note that in all the $(\log n)$-rounds verification protocols which we presented before the space cost is $\log^2 n$ bits and with changing the protocol to constant $\gamma$-rounds, we improve the space to $O(\log n)$ bits. On the other hand, in most of these protocols the communication cost is dominated by the size of \emph{certificate}, which is generally bounded by $O(n \log n)$. Thus, while using constant-round sum-check as the core of verification protocols will increase the related communication cost by a $n^{\frac{1}{\gamma}}$ factor, but that will not change the total communication cost of SIPs for matching and TSP, in which the communication cost is dominated by the certificate size.
\section{Discussion and Future Directions}
Our matching protocol requires the prover to send back an actual matching and a certificate for it. Suppose we merely wanted to verify a claimed cost for the matching. Is there a way to verify this with less communication? Another interesting question is to consider designing SIPs for graph problems which are known to be NP-hard. For example, is there any efficient SIP for verifying \textsc{Max Cut} in streaming graphs? (motivated by the fact that in standard streaming setting, even approximating maxcut is known to be hard and space lower bounds exist \cite{kapralov2015streaming}).

In our SIPs for matching, we assume that the edge weight updates are atomic. Can we relax this constraint? Justin Thaler \cite{discuss-justin} observed that by using techniques from \cite{cormode2013streaming}, we can design a SIP with $\log^2 n$ space cost and $O(W  (\log W + \log n) \log n)$ bits of communication, in which $W$ is the upper bound on the edge weights (i.e. $w_{ij} \leq W$). Both protocols will result in similar costs for any instance where the edge weights are at most $O(n)$, with having the advantage of handling incrementally-specified edge weights in the second approach. However, in the general case where $w_{ij} \in [n^c]$, our solution still has lower communication cost (worst case $n \log n$).


\bibliographystyle{acm}
\bibliography{paper141}

\end{document}